\def\f{\mathbf{f}}
\def\V{\mathcal{V}}
\def\E{\mathcal{E}}
\def\one{{\bf 1}}
\def\fw{\widehat{\f}}
\newcommand{\mypar}[1]{{\bf #1.}}
\newtheorem{myLem}{Lemma}
\DeclareMathOperator{\Y}{Y}
\DeclareMathOperator{\Adj}{A}
\DeclareMathOperator{\D}{D}
\DeclareMathOperator{\F}{F}
\DeclareMathOperator{\Pj}{P}
\DeclareMathOperator{\Q}{Q}
\DeclareMathOperator{\Ss}{S}
\DeclareMathOperator{\Vm}{V}
\DeclareMathOperator{\Um}{U}
\DeclareMathOperator{\Z}{Z}
\DeclareMathOperator{\W}{W}
\DeclareMathOperator{\M}{M}
\def\Fw{\widehat{\F}}
\newcommand{\R}{\ensuremath{\mathbb{R}}}
\DeclareMathOperator{\Id}{I}
\begin{document}


\title{Fast, Warped Graph Embedding:
\\ Unifying Framework and One-Click Algorithm}

\author{Siheng~Chen, Sufeng Niu,
Leman Akoglu, Jelena~Kova\v{c}evi\'c, Christos Faloutsos}
 
\maketitle

\begin{abstract}
What is the best way to describe a user in a social network with just a few numbers? Mathematically, this is equivalent to assigning a vector representation to each node in a graph, a process called graph embedding.  We propose a novel framework, GEM-D that unifies most of the past algorithms such as LapEigs, DeepWalk and node2vec. GEM-D achieves its goal by decomposing any graph embedding algorithm into three building blocks: node proximity function, warping function and loss function. Based on thorough analysis of GEM-D, we propose a novel algorithm, called UltimateWalk, which outperforms the most-recently proposed state-of-the-art DeepWalk and node2vec. The contributions of this work are: (1) The proposed framework, GEM-D unifies the past graph embedding algorithms and provides a general recipe of how to design a graph embedding; (2) the nonlinearlity in the warping function contributes significantly to the quality of embedding and the exponential function is empirically optimal; (3) the proposed algorithm, UltimateWalk is one-click (no user-defined parameters), scalable and has a closed-form solution.

\end{abstract}

\begin{keywords}
Representation learning, graph embedding, random walk
 \end{keywords}
 


\begin{figure}[htb]
  \begin{center}
    \begin{tabular}{cc}
\includegraphics[width=0.18\columnwidth]{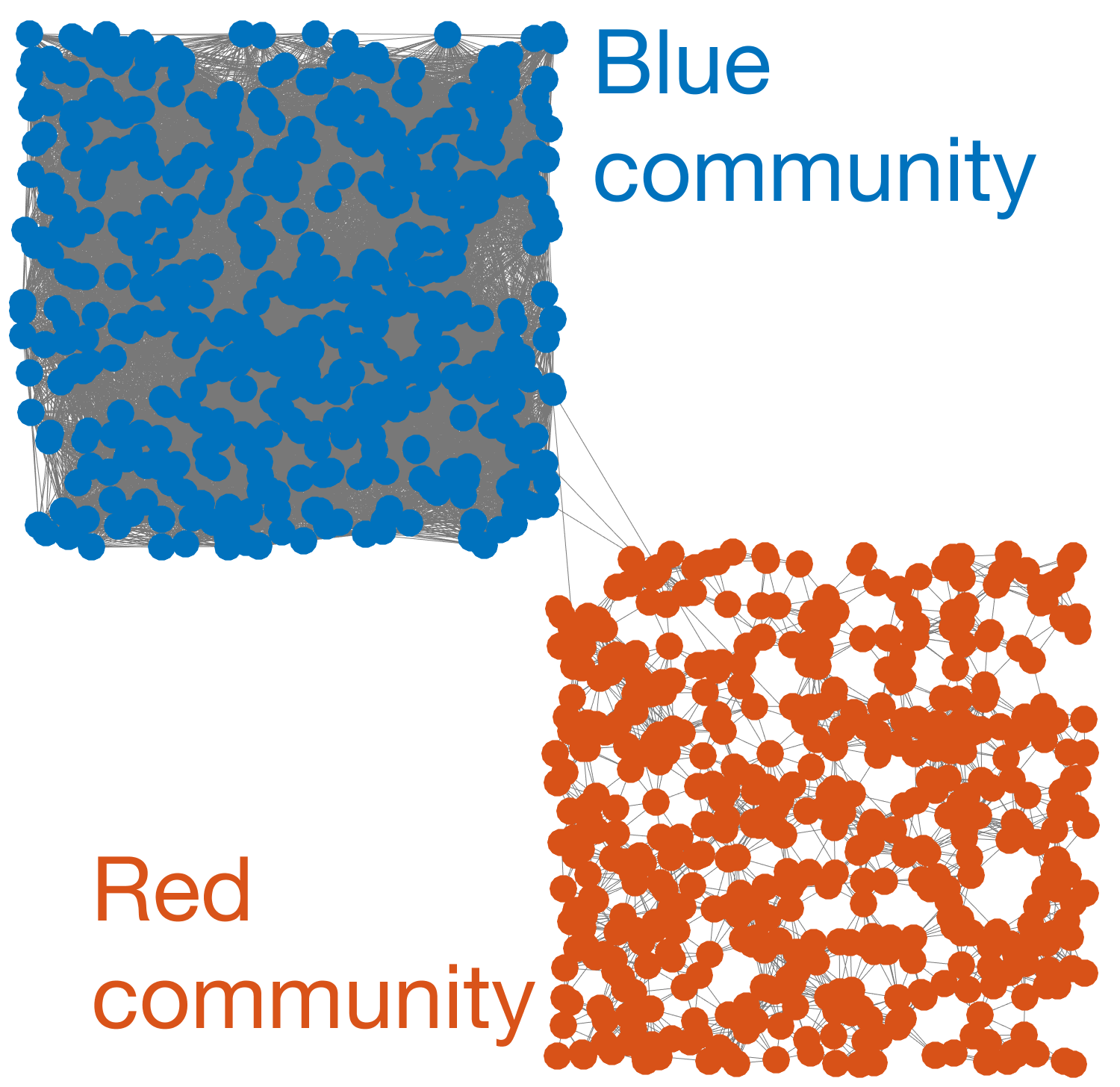}   &
\includegraphics[width=0.18\columnwidth]{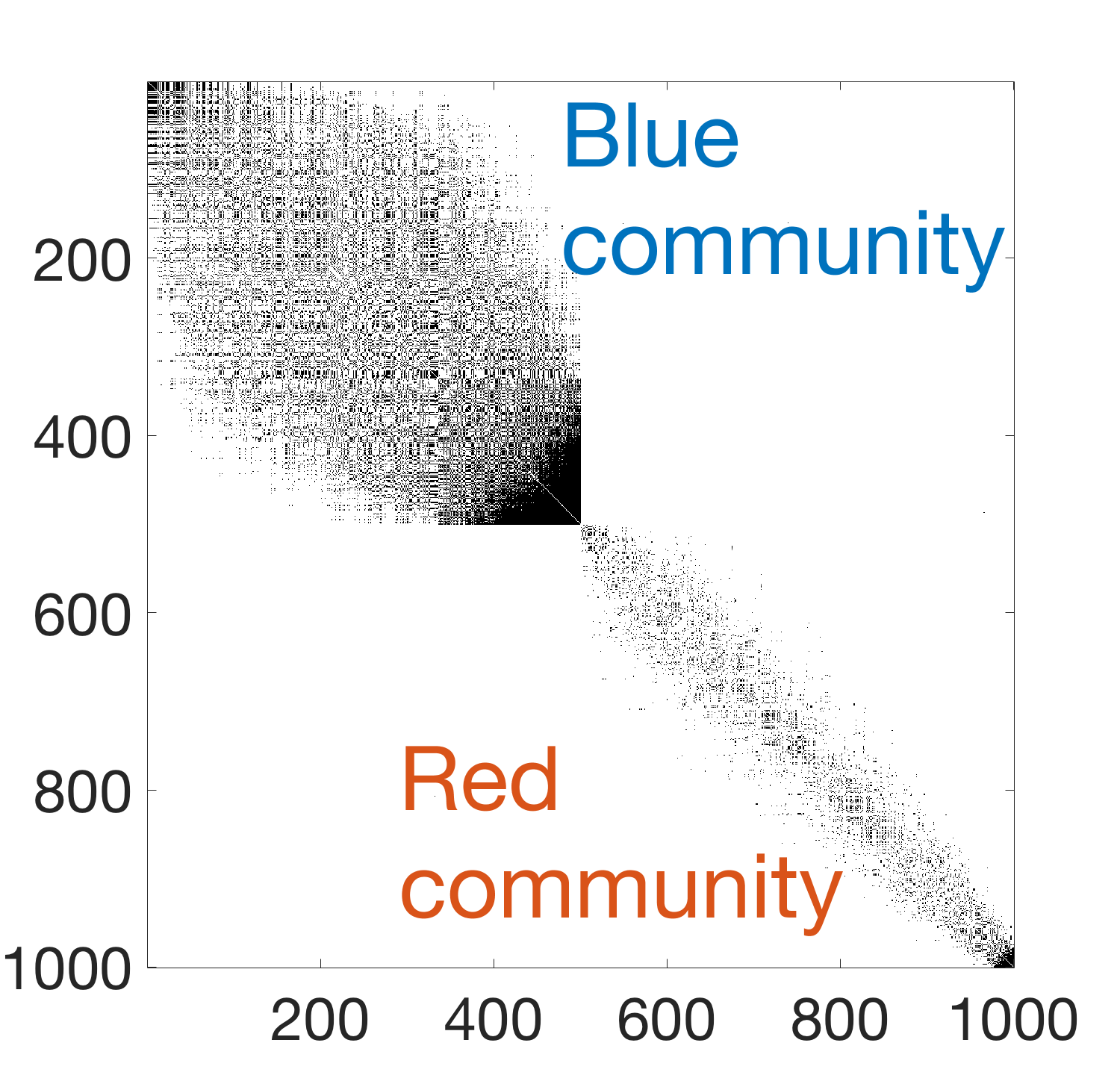}   
\\
 {\small (a)  Input graph. }  &  {\small (b)  Adjacency matrix.} 
\\
  \includegraphics[width=0.18\columnwidth]{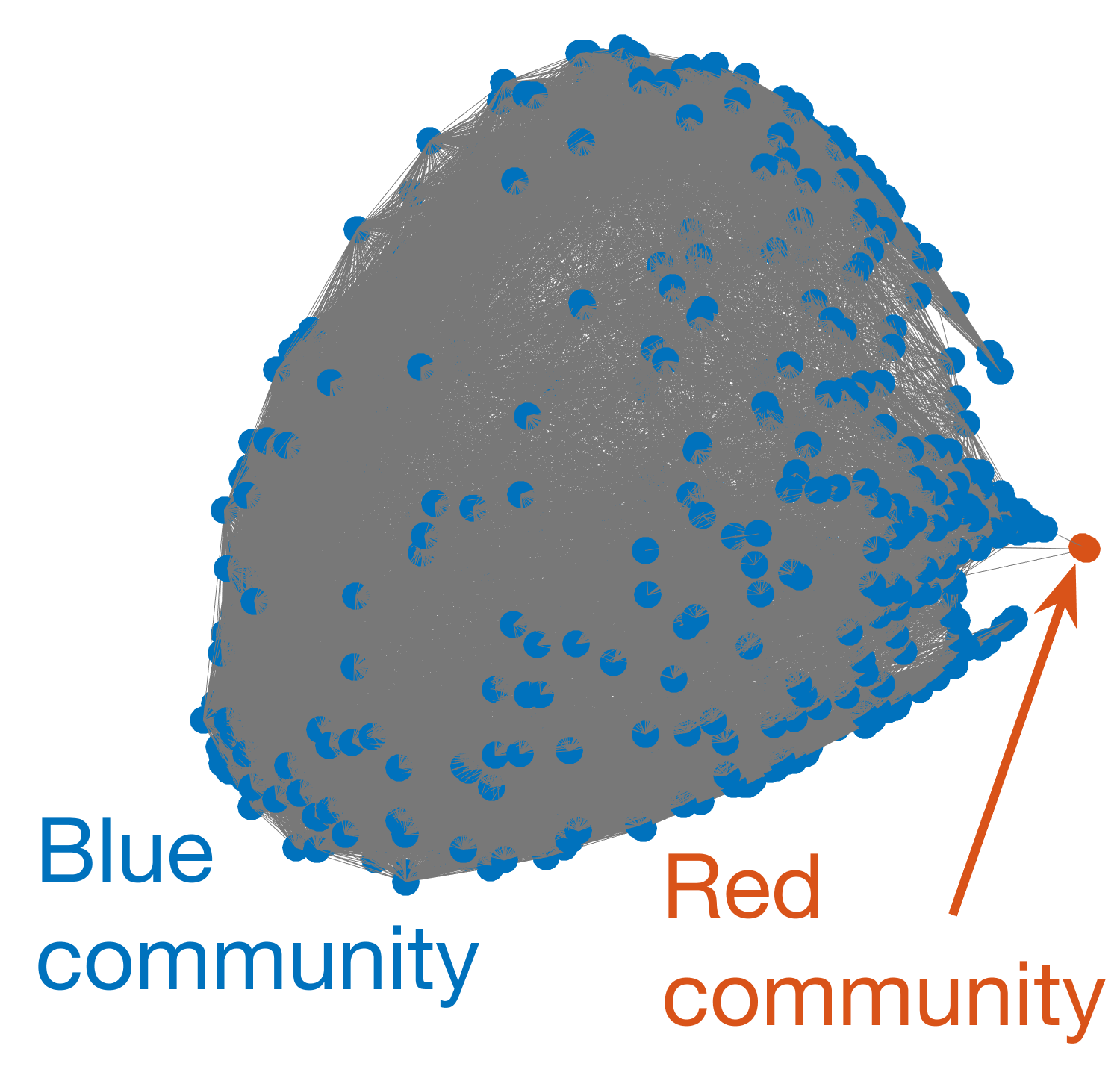}   & \includegraphics[width=0.18\columnwidth]{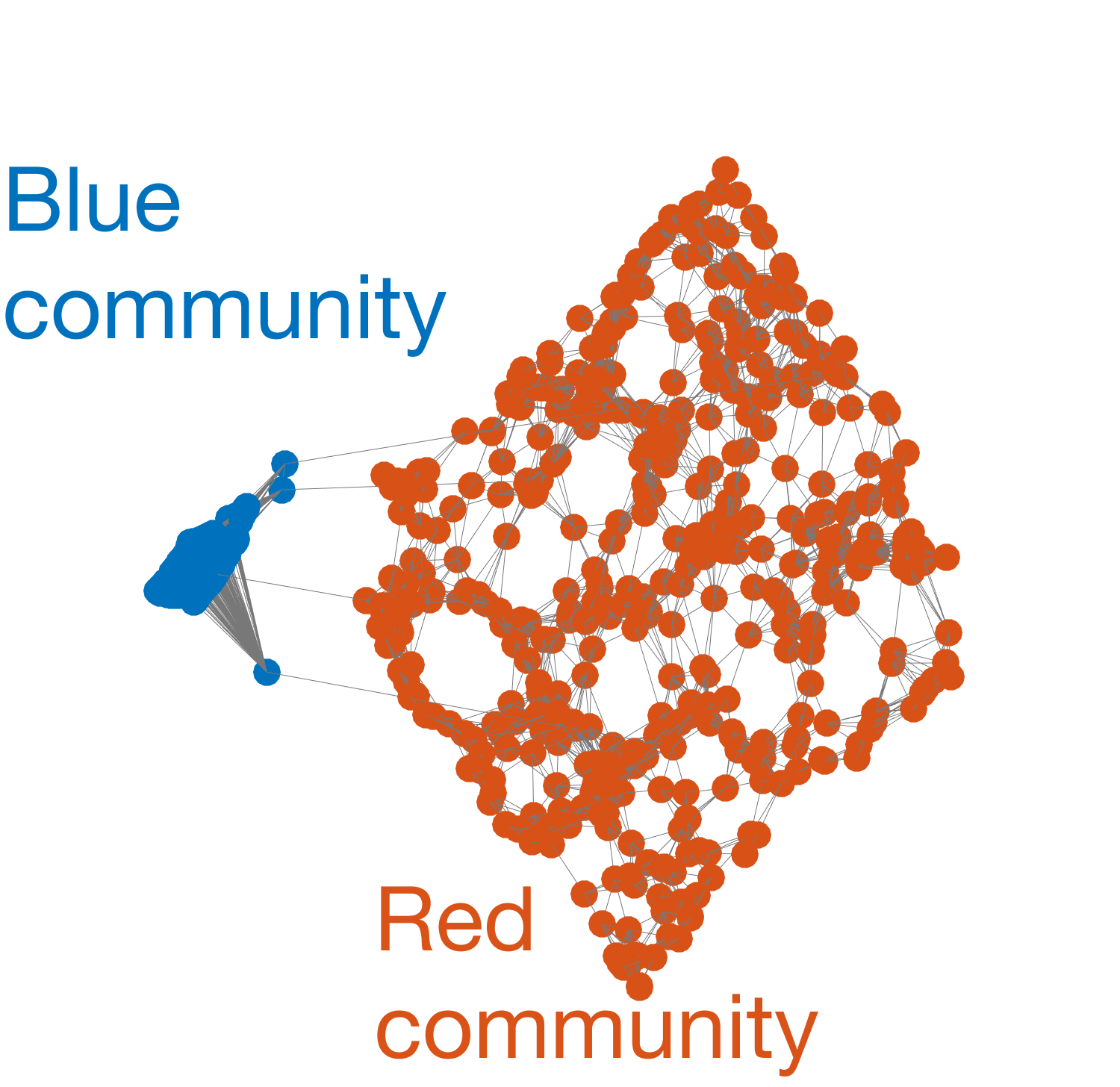} 
\\
 {\small (c) Linear embedding through SVD.} &  {\small (d)  UltimateWalk.}
    \end{tabular}
  \end{center}
  \caption{\label{fig:toy_example}  
The proposed algorithm UltimateWalk introduces a warping function, which significantly improves the quality of graph embedding. We expect the nodes in the blue community should be close in the graph embedding domain because their connections are strong, which is shown by UltimateWalk.   }
\end{figure}

\begin{figure}[htb]
  \begin{center}
    \begin{tabular}{cc}
\includegraphics[width=0.3\columnwidth]{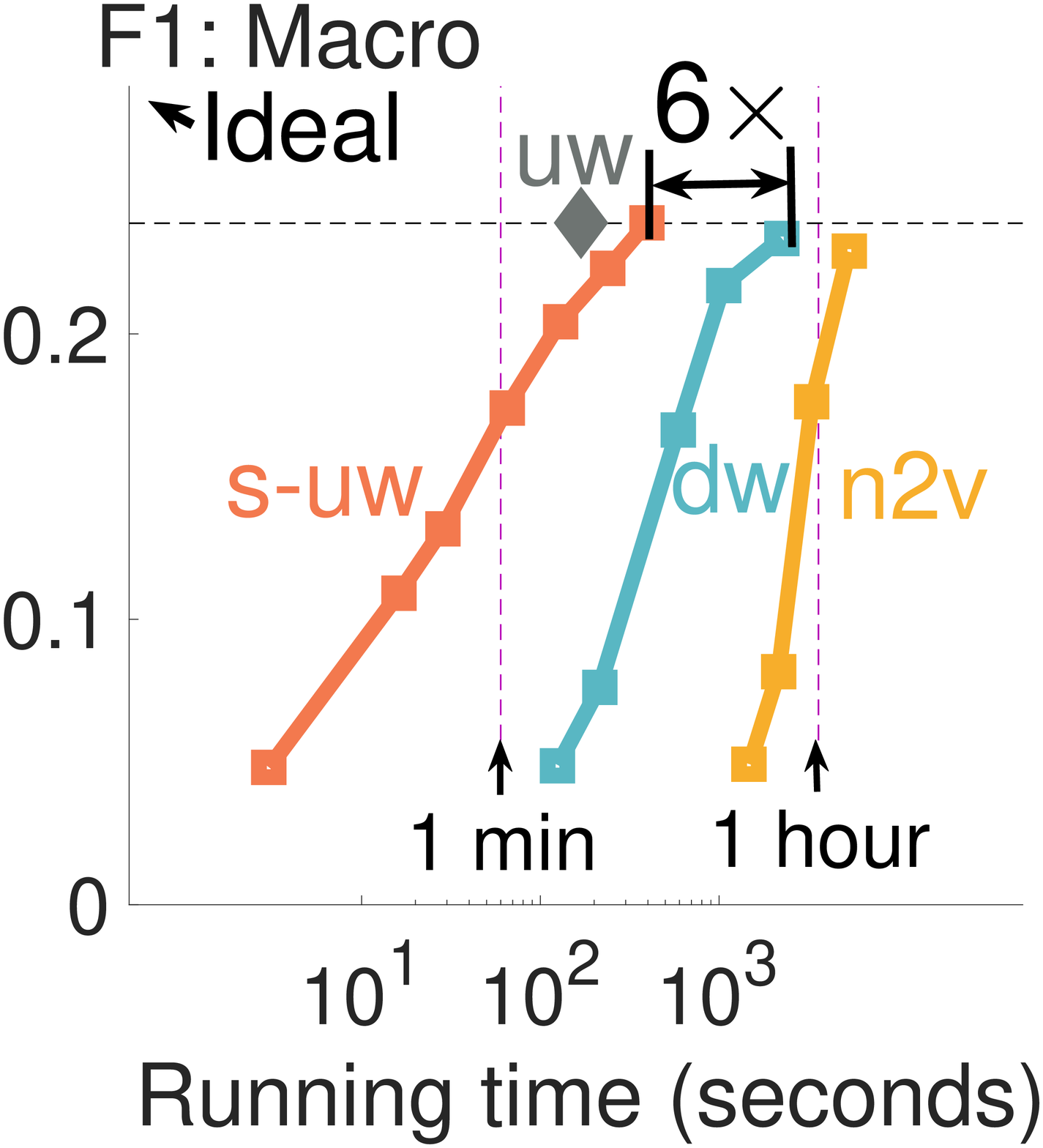}   & \includegraphics[width=0.3\columnwidth]{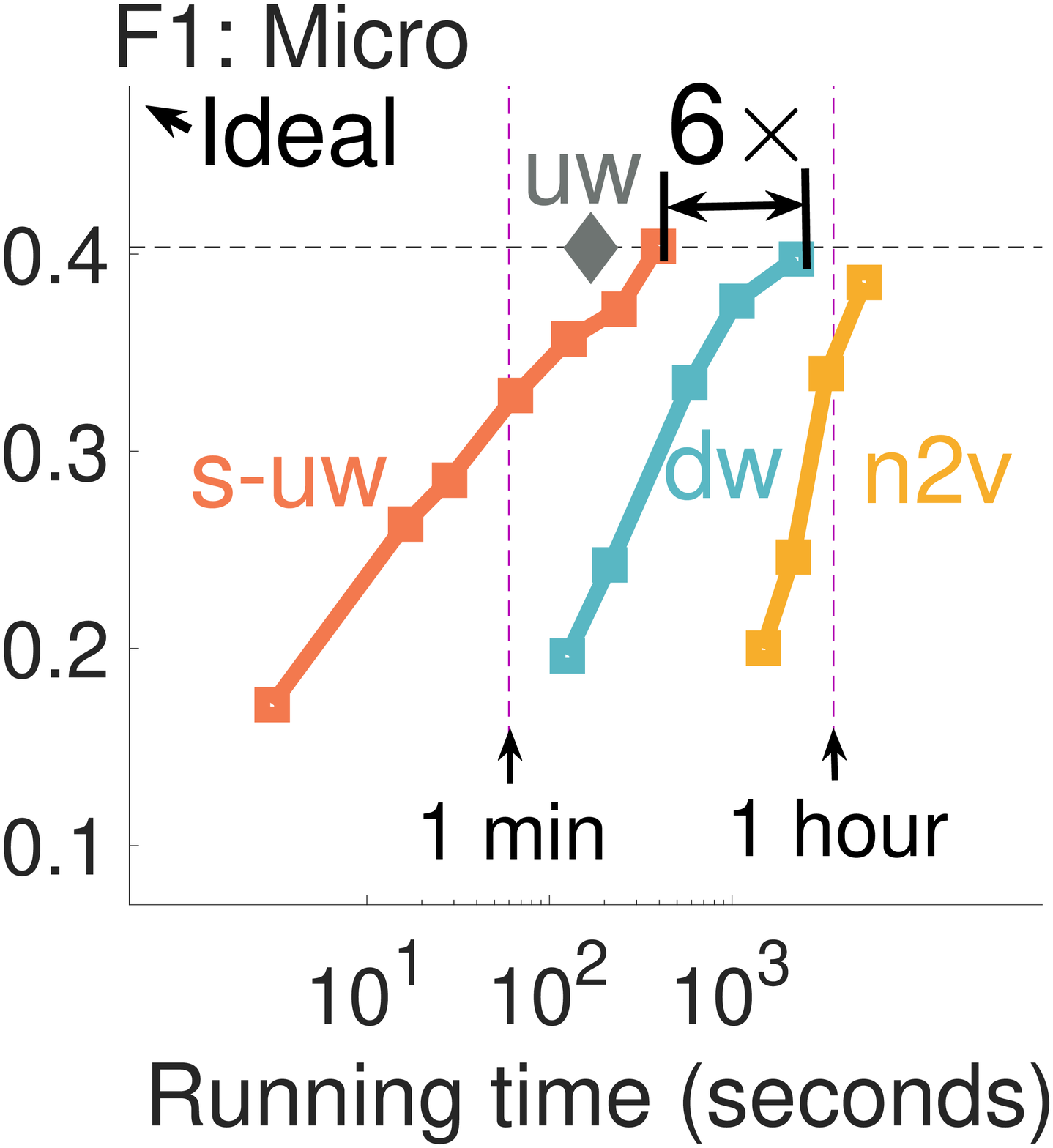}
\\
 {\small (a) Macro in BlogCatalog. }  &   {\small (b) Micro  in BlogCatalog. }
    \end{tabular}
  \end{center}
  \caption{\label{fig:crown} \emph{UltimateWalk outperforms competition.} The proposed algorithm UltimateWalk (uw) and its scalable variant (s-uw) outperforms the state-of-the-art DeepWalk (dw) and node2vec (n2v). }
\end{figure}

\begin{figure}[thb]
  \begin{center}
\includegraphics[width=0.5\columnwidth]{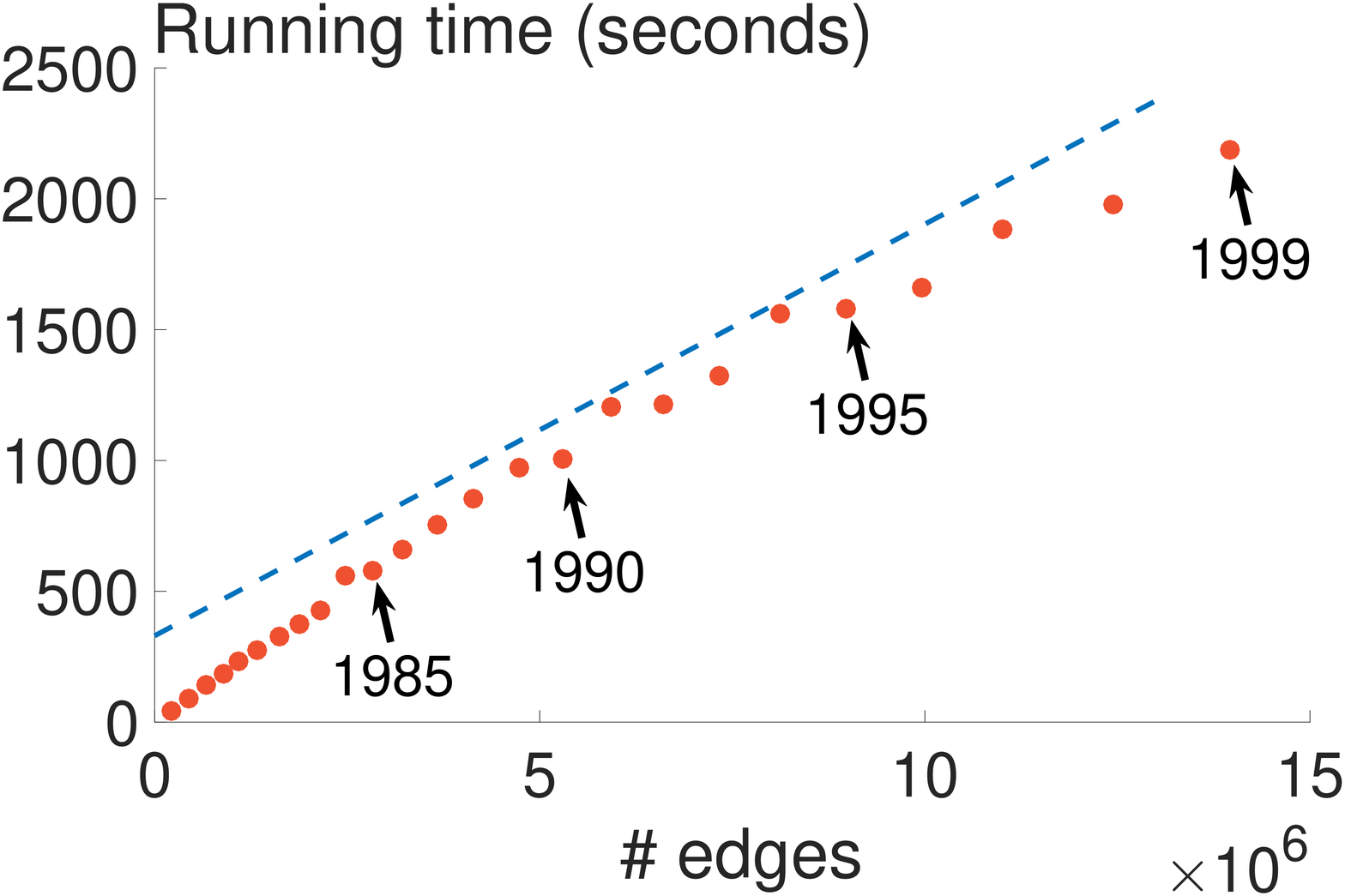}  \\
  \end{center}
  \caption{\label{fig:patent} \emph{UltimateWalk scales linearly.} The proposed algorithm UltimateWalk scales well, linearly with the number of edges. (U.S. patent 1975--1999 dataset; see Lemma{\color{red}~\ref{thm:complexity}}.)  }
\end{figure}

\section{Introduction}
A graph embedding is a collection of feature vectors associated with nodes in a graph; each feature vector describes the overall role of the corresponding node. Through a graph embedding, we are able to visualize a graph in a 2D/3D space and transform problems from a non-Euclidean space to a Euclidean space, where numerous machine learning and data mining tools can be applied. The applications of graph embedding include graph visualization~\cite{HermanMM:00}, graph clustering~\cite{Luxburg:07}, node classification~\cite{SenNBGGE:08, ZhuGL:03}, link prediction~\cite{Liben-NowellK:07,BackstromL:11}, recommendation~\cite{YuRSGSKNH:14}, anomaly detection~\cite{AkogluTK:15}, and many others. A desirable graph embedding algorithm should (1) work for various types of graphs, including directed, undirected, weighted, unweighted and bipartite; (2) preserve the symmetry between the graph node domain and the graph embedding domain; that is, when two nodes are close/far away in the graph node domain, their corresponding embeddings are close/far away in the graph embedding domain; (3) be scalable to large graphs; (4) be interpretable; that is, we understand the role of each building block in the graph embedding algorithm.

\begin{table*}
  \footnotesize
  \begin{center}
   \begin{tabular}{@{}lllllll|ll@{}}
      \toprule
      {\bf GEM-D} &  {\bf LapEigs~\cite{BelkinN:03}} & {\bf DiffMaps~\cite{CoifmanL:06}} & {\bf MF~\cite{KorenBV:09}}  &    {\bf LINE~\cite{TangQWZYM:15}}  &   
       {\bf DeepWalk~\cite{PerozziAS:14}} &  {\bf node2vec~\cite{GroverL:16}} &  {\bf UltimateWalk }
        \\
      \midrule
 	   {\bf  Loss function $d(\cdot, \cdot)$} &    Frobenius   & Frobenius   &  Frobenius  & KL   & KL  & KL  & Frobenius \\
 	 {\bf  Warping function  $g(\cdot)$}  &    linear & linear & linear & sigmoid  & exponential & exponential & exponential  \\
    {\bf  Proximity function  $h(\cdot)$} &  Laplacian & transition  & adjacency  & transition  & FST & FSMT  & FST \\
    \midrule
    {\bf  Closed form} &  \CheckmarkBold & \CheckmarkBold & \CheckmarkBold  &   &  &   & \CheckmarkBold \\
    {\bf  High-order proximity}   &  & \CheckmarkBold   &  & \CheckmarkBold  & \CheckmarkBold & \CheckmarkBold  & \CheckmarkBold \\
    {\bf  Warped} &   &   &   & \CheckmarkBold   & \CheckmarkBold & \CheckmarkBold  & \CheckmarkBold \\
      \addlinespace[1mm] \bottomrule
    \end{tabular}
  \end{center}
  \caption{\label{table:general} Unifying power of GEM-D. Many graph embeddings, including Laplacian Eigenvectors (LapEigs), Diffusion Maps (DiffMaps), matrix factorization (MF),  LINE, DeepWalk, node2vec are specific cases.}
\end{table*}

Numerous embedding algorithms have been proposed in data mining, machine learning and signal processing communities. IsoMAP~\cite{TenenbaumDL:00}, LLE~\cite{RoweisS:00}, LapEigs~\cite{BelkinN:03}, diffusion maps~\cite{CoifmanL:06} estimate the intrinsic manifold based on the distribution of data points;  SNE~\cite{HintonR:02} and t-SNE~\cite{MaatenH:08}, LargeVis~\cite{TangLZM:16} visualize high-dimensional data in a 2D/3D scatter plot; 
word2vec~\cite{MikolovCCD:13,MikolovSCCD:13} learn feature vectors for words in documents; DeepWalk~\cite{PerozziAS:14}, LINE~\cite{TangQWZYM:15}, node2vec~\cite{GroverL:16}, HOPE~\cite{OuCPZ:16}  and structural deep network embedding~\cite{WangCZ:16}  learn feature vectors for nodes to handle prediction tasks in large-scale networks; and community-based graph embeddings~\cite{WangCZPZY:17, ZhengCCCC:16} further preserve macroscopic information, such as communities.


On the heels of these outstanding contributions, {\it can we find a
  framework that will unify and explain them, a recipe to design a
  good graph embedding and an answer to what leads to fundamental
  improvements to graph embedding algorithms?} To answer these questions,   we propose a unifying graph embedding framework, called~\emph{Graph EMbedding Demystified} (GEM-D), which decomposes a graph embedding algorithm into three building blocks: node proximity function, warping function and loss function.
GEM-D clarifies the functionality of each building block and subsumes many previous graph embedding algorithms as special cases. Based on the insights GEM-D brings, we propose a novel graph embedding algorithm,~\emph{UltimateWalk}, which inherits advantages of previous algorithms; it is general enough to handle diverse types of graphs, has a closed-form solution and is effective, scalable and interpretable.

The contributions of our paper are as follows:

$\bullet$ {\bf Unifying framework.} GEM-D provides a unifying framework to design a graph embedding, subsumes LapEigs, DeepWalk and node2vec as special cases, and provides insight into the asymptotic behaviors of DeepWalk and node2vec; see Table{\color{red}~\ref{table:general}} and Section{\color{red}~\ref{sec:closed}};

$\bullet$ {\bf Fundamental understanding.} By decomposing a graph embedding algorithm into three building blocks, we are able to thoroughly analyze each and find that (1) nonlinearity introduced in the warping function is the key contributor to great performance; (2) the distribution of the elements in the proximity matrix is an important indicator of overall performance and (3) the memory factor only slightly influences the overall performance (see Figures{\color{red}~\ref{fig:blog_nonlinear}} and{\color{red}~\ref{fig:kaggle_1968_node2vec}}).

$\bullet$ {\bf One-click algorithm.}  UltimateWalk is effective and without user-defined parameters; it outperforms the  state-of-the-art algorithms (see Figures{\color{red}~\ref{fig:crown}} and{\color{red}~\ref{fig:blog_converge}}) and scales linearly with the number of input edges (see Figure{\color{red}~\ref{fig:patent}}).


{\bf Reproducibility} The datasets are available online; see Section{\color{red}~\ref{sec:exp}}. Our code is available at~\url{https://users.ece.cmu.edu/~sihengc/publications.html}.

\section{Framework: Graph EMbedding Demystified}
\label{sec:proposed}
Based on the existing graph embedding algorithms, such as Laplacian
eigenvectors, DeepWalk and node2vec, can we find a framework that will
unify and explain them all, a recipe to design a good graph
embedding and an answer to what leads to fundamental improvements to
graph embedding algorithms? We propose such a framework in this
section.

\subsection{General formulation}
Let $\mathcal{G} = (\V,\E, \Adj)$  be a directed and weighted graph, where $\V = \{v_i \}_{i=1}^N$ is the set of nodes, $\E$ is the set of weighted edges and $\Adj \in \R^{N \times N}$ is the weighted adjacency matrix\footnote{The proposed framework and algorithm can be easily extended to bipartite graphs.}, whose element $\Adj_{i,j}$ measures the direct relation between the $i$th and  $j$th nodes.  Let $\D \in \R^{N \times N}$ be a diagonal degree matrix, where $\D_{i,i} = \sum_{j \in \V} \Adj_{i,j}$.  To represent the pairwise node proximity in the graph node domain, we define a proximity matrix as $\Pi = h(\Adj) \in \R^{N \times N}$, where $\Adj$ is the adjacency matrix and $h(\cdot)$ is a node proximity function we discuss in Section{\color{red}~\ref{sec:prox}}. 

Let $\F, \Fw \in \R^{N \times K} (K \ll N)$ be a pair of dual embedding matrices with their $i$th rows, $\f_i, \fw_i \in \R^K$, the hidden features of the $i$th node. $\F$ and $\Fw$ capture in-edge and out-edge behaviors, respectively. The final graph embedding is a concatenation of these two.  In the graph embedding domain, we use the inner product of $\f_i$ and $\fw_j$ to capture the pairwise proximity between the $i$th node and  $j$th nodes, that is, we define $g(\F \Fw^T ) \in \R^{N \times N}$ as the proximity matrix in the graph embedding domain. The $(i,j)$th element $g(\f_i^T \fw_j )$ measures the proximity between the $i$th  and $j$th nodes in the graph embedding domain, with the warping function $g(\cdot)$  a predesigned element-wise, monotonically increasing function, which we discuss in Section{\color{red}~\ref{sec:warp}}.

Our goal is to find a pair of low-dimensional graph embeddings such that the node proximities in both domains are close, that is, $\Pi_{i,j}$ is close to $g(\f_i^T \fw_j)$ for each pair of $i,j$. We thus consider a general formulation, GEM-D, as
\begin{eqnarray}
  \label{eq:gem}
  \F^{*}, \Fw^{*} 
  & = & \arg \min_{\F, \Fw \in \R^{N \times K}}  d ( h(\Adj), g(\F \Fw^T) ),
\end{eqnarray}
where $d(\cdot, \cdot)$ is a loss function that measures the difference; we  discuss it in Section{\color{red}~\ref{sec:loss}}.  $g(\F \Fw^T)$ is generated from a low-dimensional space as $\F^{*}, \Fw^{*}  \in \R^{N \times K}$, which is used to approximate an input $\Pi$. We minimize the loss function{\color{red}~\eqref{eq:gem}} to control the approximation error.  

GEM-D involves the following three important building blocks, denoted by GEM-D$[h(\cdot), g(\cdot), d(\cdot, \cdot)]$:
\begin{itemize}
\item the node proximity function $h(\cdot): \R^{N \times N}
  \rightarrow \R^{N \times N}$;
\item the warping function $g(\cdot) : \R \rightarrow \R$ (when
  $g(\cdot)$ applies to a matrix, it is element-wise); and
\item the loss function $d(\cdot, \cdot) : \R^{N \times N} \times
  \R^{N \times N} \rightarrow \R$.
\end{itemize}

The general framework GEM-D makes it clear that all information comes from the graph adjacency matrix $\Adj$. Based on the properties of graphs, we adaptively choose three building blocks and solve the corresponding optimization problem to obtain a desirable graph embedding.
We consider GEM-D at a general and abstract level to demystify the constitution of a graph embedding algorithm and analyze the function of each  building block.  In what follows, we discuss each building block in detail and show that GEM-D is a unifying framework that concisely summarizes many previous graph embedding algorithms.  

\subsection{Node proximity function $h(\cdot)$}
\label{sec:prox}
The proximity matrix is a function of the adjacency matrix, $\Pi = h(\Adj)$; in other words, all information comes from given pairwise connections. The node proximity function extracts useful features and removes useless noise for subsequent analysis; for example,  the node proximity function can normalize the effect from hubs and capture the high-order proximity information. Node proximity function is implicitly used in DeepWalk and node2vec, where the proximity matrix is obtained via random-walk simulation. We consider an explicit matrix form, so that we are able to analyze the asymptotic behavior. We now consider some choices for the proximity matrix $\Pi$. 

\subsubsection{Laplacian matrix} The Laplacian matrix $\L \in \R^{N \times N}$ is defined as $\L = \D - \Adj$; this is the matrix used in LapEigs~\cite{BelkinN:03}.

\subsubsection{Transition matrix} The transition matrix $\Pj \in \R^{N \times N}$ is defined as $\Pj = \D^{-1} \Adj$. The  element $\Pj_{i,j}$ is the transition probability of a random walker at the $i$th node visiting the $j$th node in a single step; this is the matrix used in diffusion maps~\cite{CoifmanL:06} and LINE~\cite{TangQWZYM:15}.

Both the Laplacian matrix and the transition matrix normalize the original adjacency matrix based on the node degrees, which reduces the effect from hubs.

\subsubsection{Finite-step transition matrix}
The finite-step transition (FST) matrix is defined as a finite-order polynomial in $\Pj$ (transition matrix),
$
\Pi^{(L)} \ = \ \sum_{\ell=1}^L \Pj^{\ell},
$
where  $L \geq 1$ is a finite number of steps. The FST matrix models the  random walker walking $L$ steps randomly and sequentially. When a random walker starts at the $i$th node,  the expected number of times to visit the $j$th node is $\Pi^{(L)}_{i,j}$; this is the matrix implicitly used  in DeepWalk~\cite{PerozziAS:14}. 


\subsubsection{Infinite-step transition  matrix}
The infinite-step transition  (IST) matrix is defined as an infinite-order polynomial in $\Pj$ (transition matrix),
\begin{equation*}
  \label{eq:infinite_step_memory_free}
  \Pi^{(\alpha)}  \ = \ \sum_{\ell=1}^{\infty} \alpha^{\ell-1} \Pj^{\ell}
  \ = \ \frac{1}{\alpha}   \left( \mathcal{P}  - \Id  \right),
\end{equation*}
where $\mathcal{P} = \left( \Id - \alpha \Pj \right)^{-1}$ and $0 < \alpha < 1$. The IST matrix models the  random walker stopping after each step with probability $1-\alpha$. When a random walker starts at the $i$th node,  the expected number of times to visit the $j$th node is $\Pi^{(\alpha)}_{i,j}$; this is the matrix used in PageRank~\cite{BrinP:98} and personalized PageRank~\cite{Haveliwala03}.

Both FST and IST matrices consider not only direct relations but high-order, long-range node proximities.



\subsubsection{Finite-step memory-modulated transition matrix}
The finite-step memory-modulated transition (FSMT) matrix is a generalized version of the FST matrix, which also takes the memory into account.  The memory factor is introduced in node2vec~\cite{GroverL:16} without an explicit matrix form. Here we derive its closed form. The memory factor records each previous step and influences the walking direction, leading to a biased random walk, which is a trade-off between breadth-first search (BFS) and depth-first search (DFS). The memory matrix $\M \in \R^{N \times N}$ is defined as
\begin{equation*}
 \M_{ i, k }  = 
  \left\{ 
    \begin{array}{rl}
      1/p, & i = k; \\
      1, &  i~{\rm and}~k~{\rm are~adjacent}; \\
      1/q, & {\rm geodesic~distance~between}~i~{\rm and}~k~{\rm is}~2; \\
      0, & \mbox{otherwise}.
  \end{array} \right.
\end{equation*}
The memory factors $p, q$ control the walking direction: when $p$ is small, the walker tends to do BFS and when $q$ is small, the walker tends to do DFS. The FSMT matrix is defined as



\begin{equation*}
  \label{eq:finite_step_memory}
  \Pi^{(L, p, q)} = \Phi \left( \sum_{\ell=0}^{L-1}  \W^{\ell} \right) \Q^T,
\end{equation*}
where $L \geq 1$ is a finite number of walk length and the initial probability matrix is
$
 \Q  =
\begin{bmatrix}
\Q_{1}
&
\Q_{2}
&
\cdots
&
\Q_{N}
\end{bmatrix}
  \in~\R^{N \times N^2},
$
where $\Q_{k}   \in \R^{N \times N}$ with
$\left(  \Q_{k}  \right)_{m, \ell} \ = \
\Pj_{  k, m}  \mathcal{I}{( \ell = m )}$ ($\Pj$ is the transition matrix and $\mathcal{I}{( \cdot )}$ is the indicator function),
the merging matrix is
$
\Phi \ = \ \Id_N \otimes  \one_N \in \R^{N \times N^2},
$
where $\Id_N \in \R^{N \times N}$ is an identity matrix, $\one_N \in \R^N$ is an all-one vector, and
the expanded transition matrix is
\begin{eqnarray*}
 \W  \ = \
\begin{bmatrix}
\W_{1,1}  & \W_{1,2}  & \cdots   &  \W_{1,N} 
\\
\W_{2,1}  & \W_{2,2}  & \cdots  &  \W_{2,N} 
\\
\vdots  & \vdots   & \ddots  & \vdots
\\
\W_{N,1}  & \W_{N,2}  & \cdots  &   \W_{N,N} 
\end{bmatrix}
  \in \R^{N^2 \times N^2},
\end{eqnarray*}
where $\W_{i, k}   \in \R^{N \times N}$ whose elements are
\begin{equation*}
\left(  \W_{i, k}  \right)_{j, \ell} \ = \
\frac{ \Adj_{i, k} \M_{i, \ell}  }{ \sum_{v} \Adj_{v, k} \M_{v, \ell} } \mathcal{I}{( j = k )}.
\end{equation*}
The FSMT  matrix is modulated by the memory matrix $\M$.  When a memory-modulated random walker starts at the $i$th node and walks $L$ steps sequentially,  the expected number of times to visit the $j$th node is $\Pi^{(L, p, q)}_{i,j}$. When $p=q = 1$, the FSMT matrix is the FST proximity; that is, $\Pi^{(L, 1, 1)} = \Pi^{(L)}$.  This is the matrix implicitly used in node2vec~\cite{GroverL:16}.


\subsection{Warping function $g(\cdot)$}
\label{sec:warp}
The warping function warps the inner products of graph embeddings and normalizes the distribution of elements in $g(\F \Fw^T)$, providing an appropriate scale to link the proximity matrices in both node and embedding domains. The warping function is applied element-wise and monotonically increasing. 

We here consider the inverse Box-Cox transformation (IBC) as the warping function, defined as
\begin{equation*}
 g(x ) = {\rm IBC}( \gamma, x ) =
  \left\{ 
    \begin{array}{rl}
      ( 1 + \gamma x )^\frac{1}{\gamma}, & \gamma \neq 0;\\
      \exp ( x ), & \gamma = 0,
  \end{array} \right.
\end{equation*}
where $\gamma$ measures the nonlinearity. When $\gamma = 1$, $g(x) = 1+x$; this linear function is used in LapEigs~\cite{BelkinN:03}, diffusion maps~\cite{CoifmanL:06} and the exponential function  in SNE~\cite{HintonR:02}, DeepWalk~\cite{PerozziAS:14} and node2vec~\cite{GroverL:16}.


\subsection{Loss function $d(\cdot, \cdot)$}
\label{sec:loss}
The loss function quantifies the differences between the proximity matrices in two domains. Since $g(\cdot)$ is monotonic, when the rank of $g^{-1}(\Pi)$ is  $K$, we can find a pair of $\F, \Fw \in \R^{N \times K}$ satisfying  $\F \Fw^T = g^{-1}(\Pi)$ or $g(\F \Fw^T) = \Pi$. In this case, any valid loss function $d(\Pi, g(\F \Fw^T))$ should be zero, that is, the problem does not depend on the choice of loss function. When the rank of $g^{-1}(\Pi)$ is far-away from $K$, however, an appropriate loss function can still be beneficial to the overall performance.

\subsubsection{KL divergence}
We can use the KL-divergence based loss function to quantify the difference  between $\Pi$ and $g(\F \Fw^T)$.  To simplify the notation, let $\Y = g(\F \Fw^T) \in \R^{N \times N}$. We normalize the sum of each row in  both $\Pi$ and $\Y$ to be $1$ and use the KL divergence to measure the difference between two corresponding probability distributions.
The total loss between  $\Pi$ and $\Y$ is 
\begin{eqnarray*}
&&  d_{\rm KL}(\Pi, \Y) \ = \   - \sum_{i \in \V}  {\rm KL} ( \Pi_{i,:}  ||  \Y_{i,:} )
  \\
  & = &  - \sum_{i \in \V}  \sum_{j \in \V}  \left( \frac{ \Pi_{i,j} }{ \sum_{k} \Pi_{i,k}  }  \right) \log   \left(  {  \frac{ \Pi_{i,j} }{ \sum_{k} \Pi_{i,k}  }   }/{ \frac{ \Y_{i,j} }{ \sum_{k} \Y_{i, k} } }  \right),
\end{eqnarray*}
where $\Pi_{i,:}$ and $\Y_{i,:}$ denote the $i$th rows of $\Pi$ and $\Y$, respectively.  We then solve the following optimization problem:
 \begin{equation}
 \label{eq:emb}
	\F^{*}, \Fw^{*} 
	\ = \ \arg \min_{\F, \Fw \in \R^{N \times K}}   d_{\rm KL}(\Pi, \Y).
 \end{equation}
We use the gradient descent to optimize{\color{red}~\eqref{eq:emb}}. A stochastic algorithm can be used to accelerate the optimization, such as negative sampling, edge sampling or asynchronized stochastic gradient descent~\cite{MikolovCCD:13,MikolovSCCD:13,TangQWZYM:15}. 

\subsubsection{Warped Frobenius norm}
\label{sec:gem2}
The warped-Frobenius-norm based loss function is defined as
$$ d_{\rm wf}( \Pi, g(\F \Fw^T) = \left\| \F \Fw^T - g^{-1} (\Pi)  \right\|_F^2,
$$
where $g^{-1}(\cdot)$ is the inverse function of $g(\cdot)$. We thus solve the following optimization problem:
\begin{equation}
  \label{eq:emb2}
  \F^{*}, \Fw^{*} \ = \ \arg \min_{\F, \Fw \in \R^{N \times K}}  d_{\rm wf}^2( \Pi, g(\F \Fw^T).
 \end{equation}
The closed-form solution of{\color{red}~\eqref{eq:emb2}} can be efficiently obtained from the truncated singular value decomposition as follows:
\begin{subequations}
\begin{eqnarray*}
&& \bullet \Um_{[N \times K]}, \Sigma_{[K \times K]} \Vm_{[N \times K]} \leftarrow {\rm SVD}( g^{-1}\left( \Pi \right), K  );
\\
&& \bullet~{\rm Obtain}~ \F_{[N \times K]} \leftarrow \Um \Sigma^\frac{1}{2};
\\
&& \bullet~{\rm Obtain}~ \Fw_{[N \times K]} \leftarrow \Vm \Sigma^\frac{1}{2}.
\end{eqnarray*}
\end{subequations}



\subsection{Specific cases of GEM-D}
\label{sec:closed}
We now articulate the unifying power of GEM-D through special cases it encompasses.
\begin{myLem}
  \label{thm:dw}
  GEM-D includes DeepWalk as a special case with corresponding triple
  GEM-D$[ \Pi^{(L)},$ ${\rm exp}(x) , d_{\rm KL} (\cdot,\cdot)].$
  {\rm DeepWalk~\cite{PerozziAS:14} is a randomized implementation of
    GEM-D when the proximity matrix is the FST matrix, the warping
    function is the exponential function and the loss function is the
    KL divergence.}
\end{myLem}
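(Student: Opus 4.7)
The plan is to take DeepWalk's randomized objective, take expectations over the random walks, and match the three resulting quantities (the empirical target matrix, the parameterization of the model distribution, and the log-likelihood criterion) to the three building blocks $(h, g, d)$ in GEM-D.

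First I would recall DeepWalk's formulation. For each node $v_i$, DeepWalk simulates random walks of length $L$ on $\mathcal{G}$ using the transition matrix $\Pj = \D^{-1} \Adj$, collects the $(v_i, v_j)$ co-occurrences, and applies skip-gram with softmax, $\Pr(v_j \mid v_i) = \exp(\f_i^T \fw_j) / \sum_k \exp(\f_i^T \fw_k)$, trained by maximizing the log-likelihood of the observed co-occurrences. Two observations immediately match two of the three blocks: the parameterization is precisely element-wise exponential followed by row-normalization, so $g(x) = \exp(x)$; and maximizing the log-likelihood of empirical co-occurrences under this model is, up to constants independent of $(\F, \Fw)$, the same as minimizing the (row-normalized) KL divergence between the empirical co-occurrence matrix and $\Y = g(\F \Fw^T)$, so $d = d_{\rm KL}$.

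Next I would handle the proximity function, which is the step requiring the most care. Let $C_{i,j}$ denote the count of times $v_j$ appears within the walk started at $v_i$ over $L$ steps. Conditioned on the walk starting at $v_i$, the probability that step $\ell$ is at $v_j$ is $(\Pj^{\ell})_{i,j}$, so by linearity of expectation
\begin{equation*}
\Ep[ C_{i,j} ] \ = \ \sum_{\ell=1}^{L} (\Pj^{\ell})_{i,j} \ = \ \Pi^{(L)}_{i,j}.
\end{equation*}
As the number of simulated walks per source grows, the row-normalized empirical co-occurrence matrix that DeepWalk feeds into skip-gram converges to the row-normalized $\Pi^{(L)}$; hence the asymptotic target proximity matrix is exactly $h(\Adj) = \Pi^{(L)}$, which is the finite-step transition matrix defined in Section~\ref{sec:prox}.

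Finally I would combine the three matches: with target $\Pi = \Pi^{(L)}$, model proximity $\Y = \exp(\F \Fw^T)$, and objective $d_{\rm KL}(\Pi, \Y)$, the asymptotic program that DeepWalk (stochastically) optimizes is exactly \eqref{eq:gem} with the triple $[\Pi^{(L)}, \exp(x), d_{\rm KL}(\cdot,\cdot)]$. The main obstacle I expect is bookkeeping at the co-occurrence step: DeepWalk actually uses a context window inside each walk rather than counting all pairs in the walk uniformly, so I would need to state either the window-size-$L$ convention or absorb window weights into a rescaling of $\Pi^{(L)}$ (which is harmless because GEM-D is invariant to row-wise rescaling under KL). Everything else reduces to routine identification of the softmax as $g$ and of cross-entropy as $d_{\rm KL}$.
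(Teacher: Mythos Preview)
Your proposal is correct and, in fact, more detailed than what the paper supplies: the paper states this lemma without a formal proof, offering only the one-line remark that DeepWalk ``estimated the proximity matrices through random-walk simulations'' and that the closed form corresponds to ``infinitely many random walk simulations.'' Your three-block identification (softmax $\Rightarrow g=\exp$, cross-entropy $\Rightarrow d_{\rm KL}$, expected visit counts $\Rightarrow \Pi^{(L)}$) is exactly the reasoning the paper leaves implicit, and your caveat about the context window is a genuine technical point the paper glosses over; your resolution via row-rescaling invariance of $d_{\rm KL}$ is appropriate.
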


\begin{myLem}
  GEM-D includes node2vec as a special case with the corresponding
  triple GEM-D$[ \Pi^{(L,p,q)},$ ${\rm exp}(x) , d_{\rm KL}
  (\cdot,\cdot)]$.
  \label{thm:node2vec} {\rm node2vec~\cite{GroverL:16} is a randomized
    implementation of GEM-D when the proximity matrix is the FSMT
    matrix, the warping function is the exponential function and the
    loss function is the KL divergence.}
\end{myLem}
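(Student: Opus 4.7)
The plan is to proceed in close parallel to Lemma~\ref{thm:dw}, since node2vec differs from DeepWalk only in how the random walks are generated. The two components I need to match are (i) the induced proximity matrix, which should coincide with $\Pi^{(L,p,q)}$, and (ii) the skip-gram training objective with negative sampling, which should reduce to $d_{\rm KL}$ paired with $g(x)=\exp(x)$. The argument for (ii) is essentially identical to the DeepWalk case and can be reused verbatim: the skip-gram loss is a softmax cross-entropy between the empirical co-occurrence distribution and the model distribution $\exp(\f_i^T\fw_j)/\sum_k\exp(\f_i^T\fw_k)$, which, after row-normalization of both sides, is precisely $d_{\rm KL}$ applied to $\Pi$ and $g(\F\Fw^T)$ with $g=\exp$.

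The real content is (i). I would first observe that the node2vec walker is \emph{second-order} Markov: at step $t$ the next node depends on both the current node $v_t$ and the previous node $v_{t-1}$, through the memory matrix $\M$ which rescales neighbor weights by $1/p$, $1$, or $1/q$ according to the relationship to $v_{t-1}$. To linearize this, I would lift the process to the product state space $\V\times\V$, where a state $(k,\ell)$ encodes ``currently at $k$, came from neighborhood-type $\ell$.'' The transition kernel on this expanded space is exactly the block matrix $\W\in\R^{N^2\times N^2}$ defined in the paper: the block $\W_{i,k}$ records the probability of moving from state $(k,\cdot)$ to state $(i,\cdot)$, with the entry $(\W_{i,k})_{j,\ell}$ being nonzero only when $j=k$ (bookkeeping of the previous node) and proportional to $\Adj_{i,k}\M_{i,\ell}/\sum_v \Adj_{v,k}\M_{v,\ell}$ (the memory-biased, normalized transition).

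Next I would seed the walk. For a walker starting at node $k$, the initial distribution over $(m,\ell)$ at step $1$ is $\Pj_{k,m}\mathcal{I}(\ell=m)$, since the first hop has no memory to bias it. Stacking these initial vectors for all starting nodes $k$ gives exactly the block-row matrix $\Q$. Propagating for $\ell$ additional steps multiplies by $\W^\ell$, so $\Q\W^\ell$ is the distribution over expanded states after $\ell+1$ steps. Summing $\sum_{\ell=0}^{L-1}\W^\ell$ yields the expected number of visits to each expanded state within $L$ steps; the merging matrix $\Phi=\Id_N\otimes\one_N$ collapses the auxiliary coordinate $\ell$, so that $\Phi(\sum_\ell \W^\ell)\Q^T$ returns an $N\times N$ matrix whose $(i,j)$ entry is the expected number of times a length-$L$ biased walk starting at $i$ visits $j$. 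This is the matrix $\Pi^{(L,p,q)}$, and it is precisely what node2vec estimates by Monte Carlo sampling of walks.

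The main obstacle I anticipate is bookkeeping in the lifted chain: verifying that the indicator $\mathcal{I}(j=k)$ in $\W_{i,k}$ correctly maintains the ``previous node'' coordinate through multiplication, and that the initial $\Q$ together with the final merging by $\Phi$ yields the right marginal. In particular, one must check that $\sum_v \Adj_{v,k}\M_{v,\ell}$ is the correct normalizer (the biased degree of $k$ given that the walk just came from a node of type $\ell$), and that when $p=q=1$ the memory matrix collapses so that $\W$ reduces to $\Pj\otimes(\text{something})$ and $\Pi^{(L,1,1)}=\sum_{\ell=1}^L\Pj^\ell=\Pi^{(L)}$, recovering DeepWalk as a sanity check. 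Once this matrix identity is in place, combining it with the skip-gram argument from Lemma~\ref{thm:dw} finishes the proof.
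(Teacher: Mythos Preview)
Your proposal is correct and, in fact, substantially more detailed than anything the paper provides. The paper does not give a standalone proof of this lemma: after stating Lemmas~\ref{thm:dw} and~\ref{thm:node2vec} it only remarks that ``in the original DeepWalk and node2vec, the authors did not specify the closed-form proximity matrices\ldots\ here we provide their closed forms, which are equivalent to doing infinitely many random walk simulations.'' The implicit ``proof'' is thus just the construction of the FSMT matrix $\Pi^{(L,p,q)}$ in Section~\ref{sec:prox}, together with whatever argument one accepts for Lemma~\ref{thm:dw}.

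Your route --- lifting the second-order walk to the pair state space $\V\times\V$, identifying $\W$ as the transition kernel there, seeding with $\Q$, collapsing with $\Phi$, and reading off $\Pi^{(L,p,q)}$ as expected visit counts --- is exactly the derivation that justifies the paper's FSMT formula, and your sanity check $\Pi^{(L,1,1)}=\Pi^{(L)}$ is the right one. Two small cleanups: the lifted state $(k,\ell)$ should be described as ``currently at $k$, previously at node $\ell$'' rather than ``came from neighborhood-type $\ell$,'' since $\M_{i,\ell}$ depends on the actual previous node, not a type; and the phrase ``skip-gram with negative sampling'' should be ``skip-gram with softmax,'' since it is the full softmax objective (not its negative-sampling surrogate) that coincides with $d_{\rm KL}$ under $g=\exp$, as your own next sentence correctly spells out.
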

In the original DeepWalk and node2vec, the authors did not specify the
closed-form proximity matrices. They estimated the proximity matrices
through random-walk simulations. Here we provide their closed forms,
which are equivalent to doing infinitely many random walk simulations.

\begin{myLem}
  GEM-D includes LINE as a special case with the corresponding triple
  GEM-D$[ \Pj$, $\frac{1}{1+{\rm exp}(-x) },$ $d_{\rm KL}
  (\cdot,\cdot)]$.
\label{thm:line}
{\rm LINE~\cite{TangQWZYM:15}   is an implementation of GEM-D when the proximity matrix is the transition matrix, the warping function is the sigmoid function and the loss function is the KL divergence.}
\end{myLem}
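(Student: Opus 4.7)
The plan is to verify the lemma by matching each of the three GEM-D building blocks to the corresponding piece of LINE's objective. First I would recall LINE's optimization problem from \cite{TangQWZYM:15}: in its second-order form, LINE minimizes, across source nodes $i$, the KL divergence between the empirical conditional distribution $\hat p(v_j \mid v_i) = w_{ij}/d_i$ and a model distribution $p(v_j \mid v_i) \propto \exp(\fw_j^T \f_i)$; the intractable partition function is handled via negative sampling, which replaces each pairwise factor by the sigmoid $\sigma(\f_i^T \fw_j) = 1/(1+\exp(-\f_i^T \fw_j))$ together with complementary terms for sampled negatives.

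Next I would make three identifications. The empirical target $\hat p(v_j \mid v_i) = w_{ij}/d_i$ is exactly the $(i,j)$-entry of the transition matrix $\Pj = \D^{-1}\Adj$, so LINE's input proximity matrix is $h(\Adj) = \Pj$, matching the first slot of the triple. The per-pair embedding proximity is $\sigma(\f_i^T \fw_j)$, which identifies the warping function as $g(x) = 1/(1+\exp(-x))$, matching the second slot. Finally, the row-wise discrepancy LINE minimizes is, by construction, the KL divergence between the $i$-th row of $\Pi$ and the $i$-th row of $\Y = g(\F \Fw^T)$, i.e.\ exactly the loss $d_{\rm KL}(\Pi, \Y)$ defined in Section~\ref{sec:loss}. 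Plugging these three choices into the generic GEM-D objective \eqref{eq:gem} recovers LINE as an instance of GEM-D$[\Pj,\, 1/(1+\exp(-x)),\, d_{\rm KL}]$.

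The main obstacle, I expect, will be reconciling the partition function with the negative-sampling surrogate: LINE's idealized second-order objective uses a softmax over all nodes, whereas the lemma's warping function is the element-wise sigmoid. I would address this by arguing that once the rows of $g(\F \Fw^T)$ are renormalized (which is precisely what $d_{\rm KL}$ does via the $\sum_k \Y_{i,k}$ denominator), the softmax formulation and the per-pair sigmoid formulation coincide up to an additive term that does not depend on $\F$ or $\Fw$, so the two are equivalent as optimization problems over the embeddings. I would also briefly note that the first-order LINE fits the same template with the symmetric empirical target $w_{ij}/W$ in place of $w_{ij}/d_i$ and with the same sigmoid warping, so a parallel argument applies without modification to the other building blocks.
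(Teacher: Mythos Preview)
The paper does not supply a proof for this lemma; it is asserted alongside Lemmas~\ref{thm:dw}, \ref{thm:node2vec}, and \ref{thm:mf} without argument. Your proposal therefore cannot be compared against a written proof, but it is exactly the kind of verification the paper leaves implicit: identify LINE's empirical conditional $\hat p(v_j\mid v_i)=w_{ij}/d_i$ with the $i$th row of $\Pj$, identify the per-pair score with $g(\f_i^T\fw_j)$, and observe that LINE's objective is a row-wise KL. That is the intended reading.

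One caution on the step you flag yourself. The idealized second-order LINE objective uses a softmax over all nodes, which in GEM-D language corresponds to $g(x)=\exp(x)$ together with the row normalization already built into $d_{\rm KL}$; the sigmoid only enters through the negative-sampling surrogate. Your claimed reconciliation---that after row normalization the sigmoid and softmax formulations differ only by a term independent of $\F,\Fw$---is not literally true: $\sum_k \sigma(\f_i^T\fw_k)$ and $\sum_k \exp(\f_i^T\fw_k)$ are genuinely different functions of the embeddings, so the renormalized rows do not coincide. A cleaner way to justify the lemma as stated is to take the \emph{practical} LINE objective (the negative-sampling loss with sigmoid) as the definition, in which case the warping function is $g(x)=1/(1+\exp(-x))$ by inspection and no reconciliation is needed; alternatively, if you take the idealized softmax objective as the definition, the correct triple would be GEM-D$[\Pj,\exp(x),d_{\rm KL}]$. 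The paper's choice of sigmoid in the lemma indicates the former reading, so you can drop the equivalence argument and simply cite the negative-sampling form directly.
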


\begin{myLem} 
  GEM-D includes matrix factorization as a special case with the
  corresponding triple GEM-D$[ \Adj,$ $x, d_{\rm wf}
  (\cdot,\cdot)]$.
\label{thm:mf}
{\rm Matrix factorization~\cite{KorenBV:09} is an implementation of GEM-D when the proximity matrix is the adjacency matrix, the warping function is the linear function and the loss function is the warped Frobenius norm.}
\end{myLem}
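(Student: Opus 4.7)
The plan is to establish the lemma by direct substitution of the three building blocks into the general GEM-D objective~\eqref{eq:gem} and showing the resulting minimization problem coincides with the classical matrix factorization formulation. Concretely, I would take $h(\cdot)$ to be the identity map so that the proximity matrix is $\Pi = h(\Adj) = \Adj$; take the warping function $g(x) = x$, so that $g^{-1}$ is also the identity; and take the loss to be the warped Frobenius norm $d_{\rm wf}(\cdot,\cdot)$ defined in Section~\ref{sec:gem2}.

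Substituting into~\eqref{eq:gem} and using the definition of $d_{\rm wf}$, the GEM-D objective collapses to
\begin{equation*}
\F^{*},\Fw^{*} \ = \ \arg\min_{\F,\Fw \in \R^{N\times K}} \left\| \F\Fw^{T} - g^{-1}(h(\Adj)) \right\|_F^2 \ = \ \arg\min_{\F,\Fw \in \R^{N\times K}} \left\| \F\Fw^{T} - \Adj \right\|_F^2,
\end{equation*}
which is exactly the low-rank matrix factorization objective of~\cite{KorenBV:09} applied to the adjacency matrix. This identification completes the reduction.

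As a final step, I would observe that the closed-form solution of this special case is given by the truncated SVD recipe already derived in Section~\ref{sec:gem2}: with $\Um\Sigma\Vm^{T}$ being the rank-$K$ truncated SVD of $\Adj$, the pair $\F = \Um\Sigma^{1/2}$, $\Fw = \Vm\Sigma^{1/2}$ solves the problem. This not only confirms the \emph{closed-form} check mark in Table~\ref{table:general} for MF but also mirrors the mechanism by which LapEigs and DiffMaps appear as special cases (they differ only in the choice of $h(\cdot)$).

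I do not anticipate a genuine obstacle: the lemma is a bookkeeping statement, and the only subtlety is confirming that ``linear warping'' in the table is to be read as $g(x)=x$ (equivalently, the IBC family specialized appropriately), so that $g^{-1}$ leaves $\Adj$ unchanged on the right-hand side of $d_{\rm wf}$. Once that convention is made explicit, the chain of equalities above is immediate and the lemma follows.
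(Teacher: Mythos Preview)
Your proposal is correct and is exactly the natural substitution argument the paper has in mind; note that the paper itself does not supply a written proof for this lemma (nor for the analogous Lemmas~\ref{thm:dw}--\ref{thm:line}), treating the identification as immediate from the definitions in Sections~\ref{sec:prox}--\ref{sec:gem2}. Your derivation and the SVD remark simply make explicit what the paper leaves implicit.
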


\section{Algorithm: UltimateWalk}
Based on the  GEM-D framework and our experiments,  we observe that (1) the transition matrix with high-order proximity matters, while the memory factor does not (at least not much); (2) the warping function is the key contributor to the performance and the exponential function usually performs the best; and (3) once the node proximity function and warping function are properly selected, the distance function does not matter. We thus choose a triple: the FST matrix, exponential function and warped Frobenius norm to propose \fbox{UltimateWalk = GEM-D$[ \Pi^{(L)},   {\rm exp}(x) , d_{\rm wf} (\cdot,\cdot)]$}.

\subsection{Closed form}
The UltimateWalk algorithm solves
\begin{equation*}
	\F^{*}, \Fw^{*} \ = \ \arg \min_{\F, \Fw \in \R^{N \times K}}  \left\| \F \Fw^T  - \log \left( \Pi^{(L)} \right) \right\|_F^2,
\end{equation*}
where $L$ is the diameter of the given graph (based on our experiments, we set our default value to be $7$). To ensure numerical correctness, during the computation, we take the log of non-zero elements and replace others with $-c$ (the default value for $c$ is $100$), that is, we set 
\begin{equation}
  \label{eq:Z}
  \Z \ = \ \log ( \Pi^{(L)} ) \mathcal{I}{( \Pi^{(L)} \neq 0 )} -c \mathcal{I}{( \Pi^{(L)} =0 )} \in \R^{N \times N},
\end{equation}
 where $\mathcal{I}{(\cdot)}$ is the element-wise indicator function. The solution is obtained from the truncated singular value decomposition of $\Z$. UltimateWalk has the closed-form solution; the choice of each building block in UltimateWalk is based on an insightful understanding of GEM-D. We find that $\log ( \Pi^{(7)} )$ maximizes the empirical performance by normalizing the distribution of elements in $\log ( \Pi^{(7)} )$. Specifically, $L =7 $ is usually the diameter of a social network and $\Pi^{(7)}$ properly diffuses information to all nodes without introducing echo and redundancy. Empirically, we observe that the elements in $\Pi^{(7)}$ approximately follow a log-normal distribution and SVD can only preserve high-magnitude elements. The logarithm factor aids in normalizing the distribution, such that SVD preserves more information in $\Pi^{(7)}$ (see Figure{\color{red}~\ref{fig:hist})}.

\subsection{Scalable implementation}
\label{sec:LS}
Since $\Pi^{(L)} $ is not necessarily a sparse matrix, when the graph is large, we cannot afford to compute the closed form of $\log ( \Pi^{(L)} )$, which if of order $O(N^2)$; instead, we estimate it via random-walk simulation. Specifically,  to estimate the $(i,j)$th element $\Pi^{(L)}_{i,j}$,  we run $L$-step random walks from the $i$th node for $m$ independent trials and record the number of times that the $j$th node has been visited, denoted by $\Ss_{i,j}$. In each trial, we start from the $i$th node and walk $L$ steps randomly and sequentially. The estimate of $\Pi^{(L)}_{i,j}$ is then $\widetilde{\Pi}^{(L)}_{i,j} = \Ss_{i,j} /m$.  To work with a sparse matrix, we further obtain a proxy matrix $\widetilde{\Z} \in \R^{N \times N}$ defined as
$$\widetilde{\Z} = \left( \log ( \widetilde{\Pi}^{(L)} ) + c \right) \mathcal{I}{( \widetilde{\Pi}^{(L)} \neq 0 )} \in \R^{N \times N}.$$
When $m \rightarrow +\infty$,  $\widetilde{\Pi}^{(L)}_{i,j} \rightarrow \Pi^{(L)}_{i,j}$ and $\widetilde{\Z} \rightarrow \Z + c \one_N \one_N^T$, where $\Z$ follows from{\color{red}~\eqref{eq:Z}}. We obtain the solution by the singular value decomposition of the proxy matrix $\widetilde{\Z}$. Since $\widetilde{\Pi}^{(L)}$ is sparse, $\widetilde{\Z}$  is sparse and  the computation of singular value decomposition is cheap; we call this approach~\emph{scalable UltimateWalk}.  We add constant $c$ to all the elements in $\log (\widetilde{\Pi}^{(L)} )$, which will introduce a constant singular vector and does not influence the shape of the graph embedding.

\begin{myLem} UltimateWalk is scalable.
\label{thm:complexity}
{\rm The time and space complexities of scalable UltimateWalk are at most $O(K \max( |\E|,$  $NLm) )$ and $O(\max( |\E|,$ $NLm))$, respectively, where $N$ is the size of graph, $K$ is the dimension of graph embedding, $L$ is the walk length and $m$ is the number of random-walk trials.}
\end{myLem}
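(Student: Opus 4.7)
The plan is to decompose scalable UltimateWalk into its three computational stages — random-walk sampling, construction of the sparse proxy matrix $\widetilde{\Z}$, and truncated SVD — and bound the time and space used by each. The bookkeeping step is to verify that the dominant term in each stage is absorbed by $O(\max(|\E|, NLm))$ (with an extra $K$ factor in the SVD stage), and then take the maximum.

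First I would analyze the random-walk sampling. The graph itself, stored as an adjacency list, costs $O(|\E|)$ space, and each step of a random walker at a node of degree $d$ can be performed in $O(1)$ expected time after $O(d)$ preprocessing for alias sampling, so the preprocessing is $O(|\E|)$. For each of the $N$ starting nodes we run $m$ independent walks of length $L$, which contributes $O(NLm)$ time and writes at most $NLm$ entries into the counter that realizes $\Ss$, so $\widetilde{\Pi}^{(L)}$ has at most $NLm$ nonzeros. Hence this stage runs in time $O(|\E| + NLm) = O(\max(|\E|, NLm))$ and uses space $O(\max(|\E|, NLm))$.

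Second I would handle $\widetilde{\Z}$. By construction $\widetilde{\Z}$ has the same sparsity pattern as $\widetilde{\Pi}^{(L)}$: only nonzero entries of $\widetilde{\Pi}^{(L)}$ are mapped to $\log(\widetilde{\Pi}^{(L)}_{i,j}) + c$ and the remaining entries stay zero. So forming $\widetilde{\Z}$ takes time and space linear in $\mathrm{nnz}(\widetilde{\Pi}^{(L)}) = O(NLm)$. For the third stage, a truncated rank-$K$ SVD of a sparse matrix $\widetilde{\Z}$ with $\mathrm{nnz}(\widetilde{\Z}) = s$ can be computed by a Krylov/Lanczos-style iteration in time $O(K \cdot s)$ and space $O(s + NK)$, because each of the $O(K)$ Lanczos iterations performs one matrix–vector multiply with $\widetilde{\Z}$ costing $O(s)$. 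Plugging in $s = O(\max(|\E|, NLm))$ gives SVD cost $O(K \max(|\E|, NLm))$ in time and $O(\max(|\E|, NLm))$ in space (the $NK$ factor-storage term is absorbed because $K$ is small and $N \leq \max(|\E|, NLm)$ in any connected regime of interest).

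Summing the three stages and taking the dominant term yields total time $O(K \max(|\E|, NLm))$ and space $O(\max(|\E|, NLm))$, matching the statement. The main obstacle I anticipate is the SVD bound: it requires citing (or briefly justifying) that a truncated SVD can be obtained by $O(K)$ sparse matrix–vector products rather than a dense $O(N^3)$ factorization; the rest is routine accounting. A secondary subtlety is making the $O(1)$-per-step random walk precise, which can be dispatched by pointing to alias tables built once per node in time proportional to its degree.
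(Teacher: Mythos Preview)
Your proposal is correct and follows essentially the same argument as the paper: bound $\mathrm{nnz}(\widetilde{\Z})$ by $NLm$ via the observation that each of the $m$ length-$L$ walks from a node visits at most $L$ distinct vertices, then invoke the $O(K\cdot\mathrm{nnz})$ cost of truncated SVD. The paper's own proof is just these two sentences without the alias-table and Lanczos justifications you spell out, so your version is strictly more detailed but not different in substance.
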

\begin{proof}
Each random walk from a given node can visit at most $L$ unique nodes and $m$ such walks would give non-zero proximity for a node to at most $Lm$ neighbors. Thus, the space requirement is $O(\max(|\E|, NLm))$. The complexity of the truncated singular value decomposition of $\widetilde{\Z}$ is then $O(K \max( |\E|,$ $ NLm) )$.
\end{proof}
In practice, a few nodes are visited more frequently than others; thus, the sparsity of the proxy matrix $\widetilde{\Z}$ is much smaller than $O(NLm)$.

\subsection{Speeding~up~via~data~splitting}
Instead of using all the random walks to estimate $\Pi^{(L)}$ and then taking the logarithm operator, we run $T$ threads in parallel, estimate $\log ( \Pi^{(L)} )$ in each thread, and finally average the results in all threads to obtain a more accurate estimation of $\log ( \Pi^{(L)} )$. The advantages are two-fold: (1) we parallelize the procedure; (2) we focus on estimating $\log ( \Pi^{(L)} )$, not $\Pi^{(L)}$. As mentioned earlier, the empirical observations shows that each element in $\log ( \Pi^{(L)} )$ approximately follows the normal distribution. The following lemma ensures that the data splitting procedure provides the maximum likelihood estimator of $\log ( \Pi^{(L)} )$.



\begin{myLem} Average provides the maximum likelihood estimator.
  \label{thm:boot} {\rm Denote the $(i,j)$th element of
    $\widetilde{\Pi}^{(L)}$ in the $t$th thread as $x_t$. Let $\log
    (x_1)$, $\log (x_2),$ $\cdots$, $\log (x_T)$ be independent and
    identically distributed random variables, which follow the normal
    distribution with mean $\log({\Pi_{i,j}})$ and variance
    $\sigma^2$. Then, the maximum likelihood estimator of
    $\log({\Pi_{i,j}})$ is $ \sum_{t=1}^T \log (x_t) / T.  $}
\end{myLem}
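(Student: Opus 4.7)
The statement is the classical fact that the sample mean is the maximum likelihood estimator for the mean of a normal distribution with known or unknown variance, specialized here to the transformed samples $\log(x_t)$. My plan is to write out the likelihood, take logs, differentiate in the unknown mean $\mu := \log(\Pi_{i,j})$, and solve the first-order condition.

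Concretely, I would first set $y_t := \log(x_t)$ so that $y_1, \ldots, y_T$ are i.i.d. $\mathcal{N}(\mu, \sigma^2)$, turning the problem into the textbook Gaussian mean estimation problem. Then I would write the joint density
\begin{equation*}
\mathcal{L}(\mu) \;=\; \prod_{t=1}^{T} \frac{1}{\sqrt{2\pi\sigma^2}} \exp\!\left( - \frac{(y_t - \mu)^2}{2 \sigma^2} \right),
\end{equation*}
and pass to the log-likelihood
\begin{equation*}
\ell(\mu) \;=\; -\frac{T}{2} \log(2 \pi \sigma^2) - \frac{1}{2\sigma^2} \sum_{t=1}^{T} (y_t - \mu)^2 .
\end{equation*}
Differentiating with respect to $\mu$ gives $\partial \ell / \partial \mu = \sigma^{-2} \sum_t (y_t - \mu)$, and setting this to zero yields the unique stationary point $\widehat{\mu} = T^{-1} \sum_{t=1}^T y_t = T^{-1}\sum_{t=1}^T \log(x_t)$, which is exactly the claimed estimator of $\log(\Pi_{i,j})$.

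To finish, I would verify that this stationary point is a global maximum: the second derivative $\partial^2 \ell / \partial \mu^2 = -T/\sigma^2 < 0$, so $\ell$ is strictly concave in $\mu$ and the critical point is the unique maximizer. This concavity argument also makes the result insensitive to whether $\sigma^2$ is known or treated as a nuisance parameter, since the $\mu$-optimum of $\ell(\mu,\sigma^2)$ at any fixed $\sigma^2 > 0$ is the same sample mean.

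There is essentially no obstacle here; this is the canonical Gaussian MLE computation. The only subtlety worth a sentence in the actual proof is the change of variables from $x_t$ to $y_t = \log(x_t)$: since $\log$ is a smooth bijection on $(0,\infty)$ and the MLE is invariant under reparametrization of the sample space (the Jacobian factors do not depend on $\mu$), the maximization over $\mu$ reduces exactly to the Gaussian-mean case above. That is the one line I would take care to state cleanly, after which the derivation is a few lines.
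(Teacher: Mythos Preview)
Your proposal is correct and matches the paper's approach: the paper simply states that the result follows from the maximum likelihood estimator of the log-normal distribution, and your argument is exactly the standard derivation that unpacks that one-line remark. There is nothing to add or correct.
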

The proof follows from the maximum likelihood estimator of the log-normal distribution. Guided by Lemma{\color{red}~\ref{thm:boot}} and considering numerical issues, we obtain the proxy matrix $\widetilde{ \Z}_{t}$ in the $t$th thread, average the proxy matrices in all the threads and obtain
$
\widetilde{\Z}_{\rm avg} =  \sum_{t=1}^T \widetilde{\Z}_{t}/T,
$
 which estimates $\log ( \Pi^{(L)} )+c$.
Finally, we do the singular value decomposition of $\widetilde{\Z}_{\rm avg}$.

\section{Experimental Results}
\label{sec:exp}
The main goal here is to empirically demystify how each building block of GEM-D influences the overall performance. We plugin various proximity matrices and warping functions to GEM-D and test their performance. To avoid the randomness and suboptimal solutions, we adopt the warped Frobenius norm and its corresponding SVD solution, which is unique and deterministic.

We focus on the task of node classification; we observe a graph where each node has one or more labels, we randomly sample the labels of a certain fraction (labeling ratio) of nodes  and we aim to predict the labels of the remaining nodes. We repeat this process for $20$ times and report the average results in terms of  Macro  and Micro F1 scores~\cite{TangL:09a}. For all the graph embeddings, we 
choose dimension size $K = 64$ and implement the classification by using a one-vs-rest logistic regression of LibLinear~\cite{FanCHWL:08}.

\subsection{Datasets}

We use the following datasets:
\begin{itemize}
\item {\tt Kaggle}
  1968\footnote{~\url{https://www.kaggle.com/c/learning-social-circles/data}}.
  This is a social network of Facebook users. The labels are the
  social circles among the users. The network has $277$ nodes, $2,321$
  edges and $14$ labels.

\item {\tt
    BlogCatalog}\footnote{~\url{http://socialcomputing.asu.edu/datasets/BlogCatalog3}}~\cite{TangL:09a}.
  This is a network of social relationships provided by bloggers on
  the BlogCatalog website. The labels are the topic categories
  provided by the bloggers. The network has $10,312$ nodes, $333,983$
  edges and $39$ labels.

\item {\tt
    Flickr}\footnote{~\url{http://socialcomputing.asu.edu/datasets/Flickr}}~\cite{TangL:09b}.
  This is a social network of Flickr users who share photos.  The
  labels are the interest groups of users. The network has $80,513$
  nodes, $5,899,882$ edges and $195$ labels.

\item  {\tt  U.S. Patent 1975-1999}\footnote{~\url{http://snap.stanford.edu/data/cit-Patents.html}}~\cite{LeskovecKF:05}. U.S. patent dataset is maintained by the National Bureau of Economic Research. This is a citation network among the patents granted between 1975 and 1999. The network has $3,774,768$ nodes, $16,518,948$ edges.
\end{itemize}


\subsection{Which factor matters most?}
We aim to demystify a graph embedding, that is,   investigate its building blocks from four aspects: randomized implementation, walk length, nonlinearity of the warping function and memory factors.

\subsubsection{Random walk vs. closed-form}
Does the power of DeepWalk  and node2vec come from the flexible exploration via random walks?  In Section{\color{red}~\ref{sec:closed}}, we have shown that DeepWalk  and node2vec are randomized implementations of GEM-D as the corresponding proximity matrix is approximated via random walks. We compare the original DeepWalk and node2vec with a corresponding closed-form solution.

We compare the classification performance of DeepWalk, node2vec, the closed form of UltimateWalk and scalable UltimateWalk on three datasets: Kaggle 1968, BlogCatalog and Flickr.  For DeepWalk, we set the walk length to be $7$;
for node2vec, we set the walk length to be $7$ and the memory factors $p = 0.25 ,q = 0.25$, as recommended in~\cite{GroverL:16}.  We set the labeling ratio to $50\%$; that is, a half of the nodes is used for training and the rest for testing. The classification performance is shown in Figure{\color{red}~\ref{fig:blog_converge}}. In each plot, the $x$-axis is the running time and the $y$-axis is the F1 score (either Macro or Micro). The blue, yellow and red curves are DeepWalk (dw), node2vec  (n2v) and scalable UltimateWalk (s-uw), which all rise as the running time grows; black diamonds denotes the closed-form of UltimateWalk~(uw).

 We see that (1) as the closed-form solution, UltimateWalk provides the most accurate classification performance; with increasing running time, DeepWalk, node2vec and scalable UltimateWalk  converge to UltimateWalk, which validates Lemma{\color{red}~\ref{thm:dw}}; (2) with increasing size of the input graph, the closed-form UltimateWalk takes much longer in terms of running time; (3) at a similar level of classification performance, the proposed scalable UltimateWalk runs much faster than DeepWalk and node2vec. 
 
\mypar{Summary} Randomness does not improve accuracy, but makes the algorithm scalable.

\begin{figure}[htb]
  \begin{center}
    \begin{tabular}{cc}
        \includegraphics[width=0.3\columnwidth]{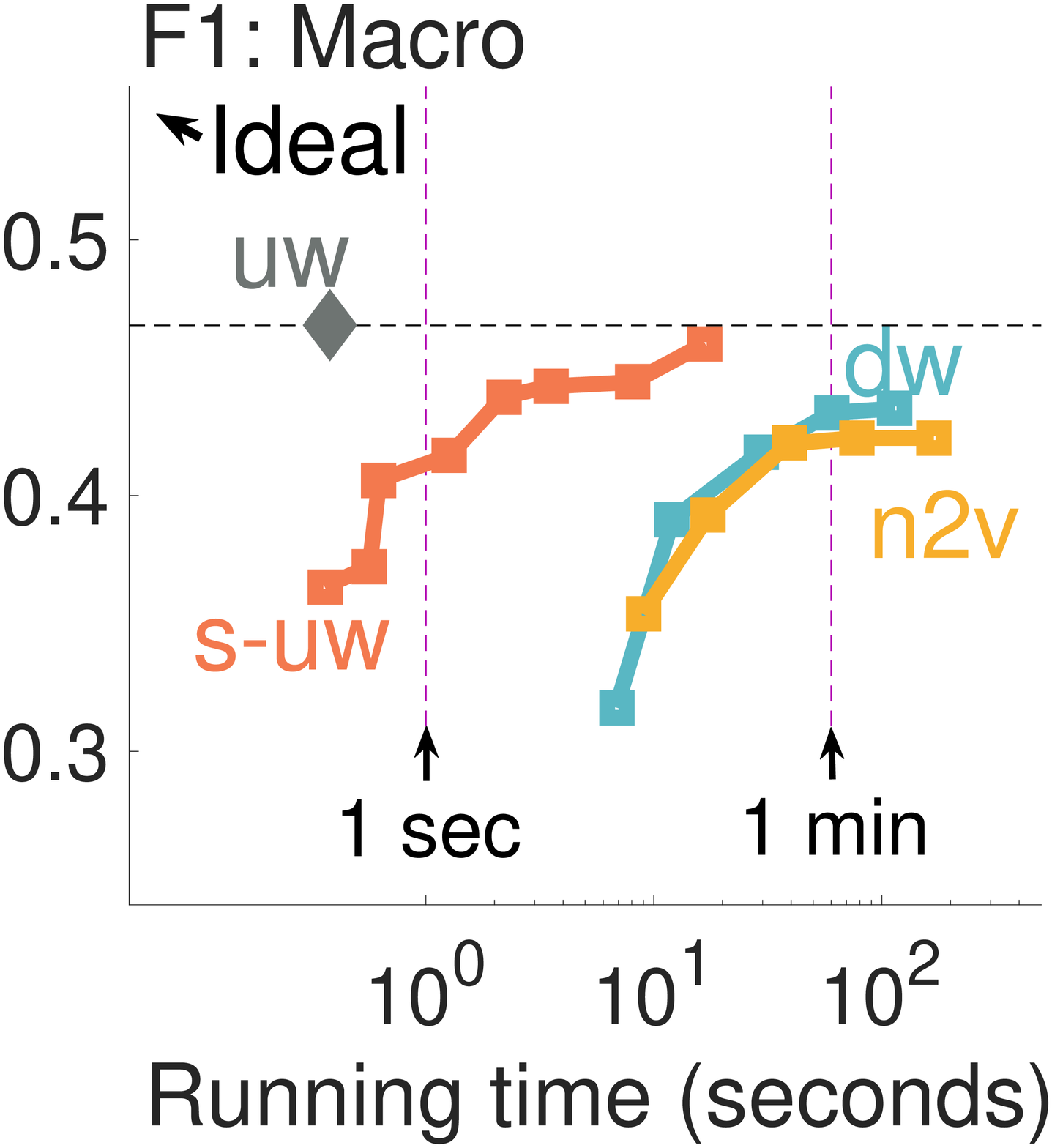}   & \includegraphics[width=0.3\columnwidth]{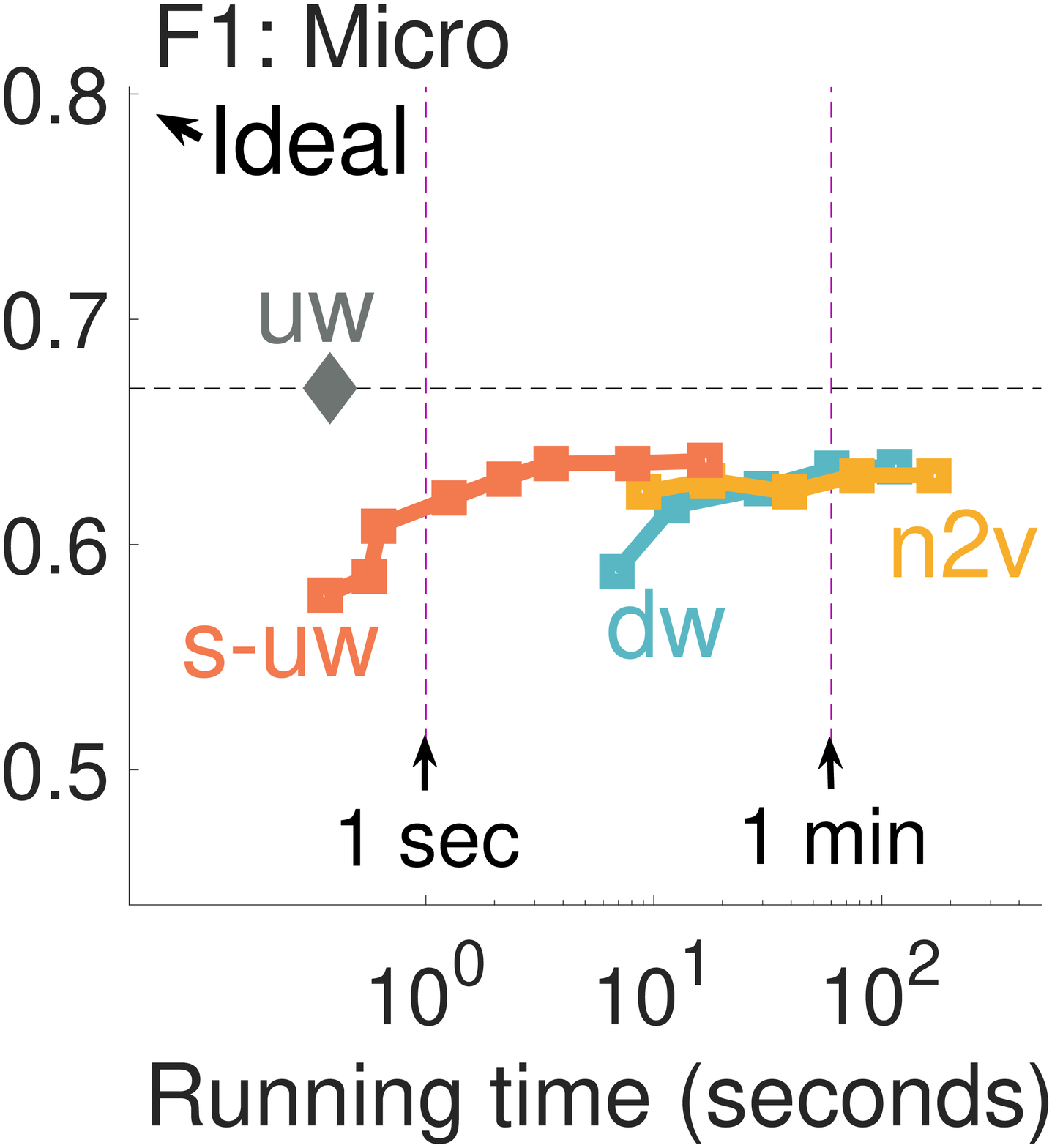}
\\
 {\small (a) Macro in Kaggle 1968. }  &   {\small (b) Micro  in Kaggle 1968. }
 \\
\\
\includegraphics[width=0.3\columnwidth]{figures/comparison/blog_macro_50.eps}   & \includegraphics[width=0.3\columnwidth]{figures/comparison/blog_micro_50.eps}
\\
 {\small (c) Macro in BlogCatalog. }  &   {\small (d) Micro  in BlogCatalog. }
\\
\\
\includegraphics[width=0.3\columnwidth]{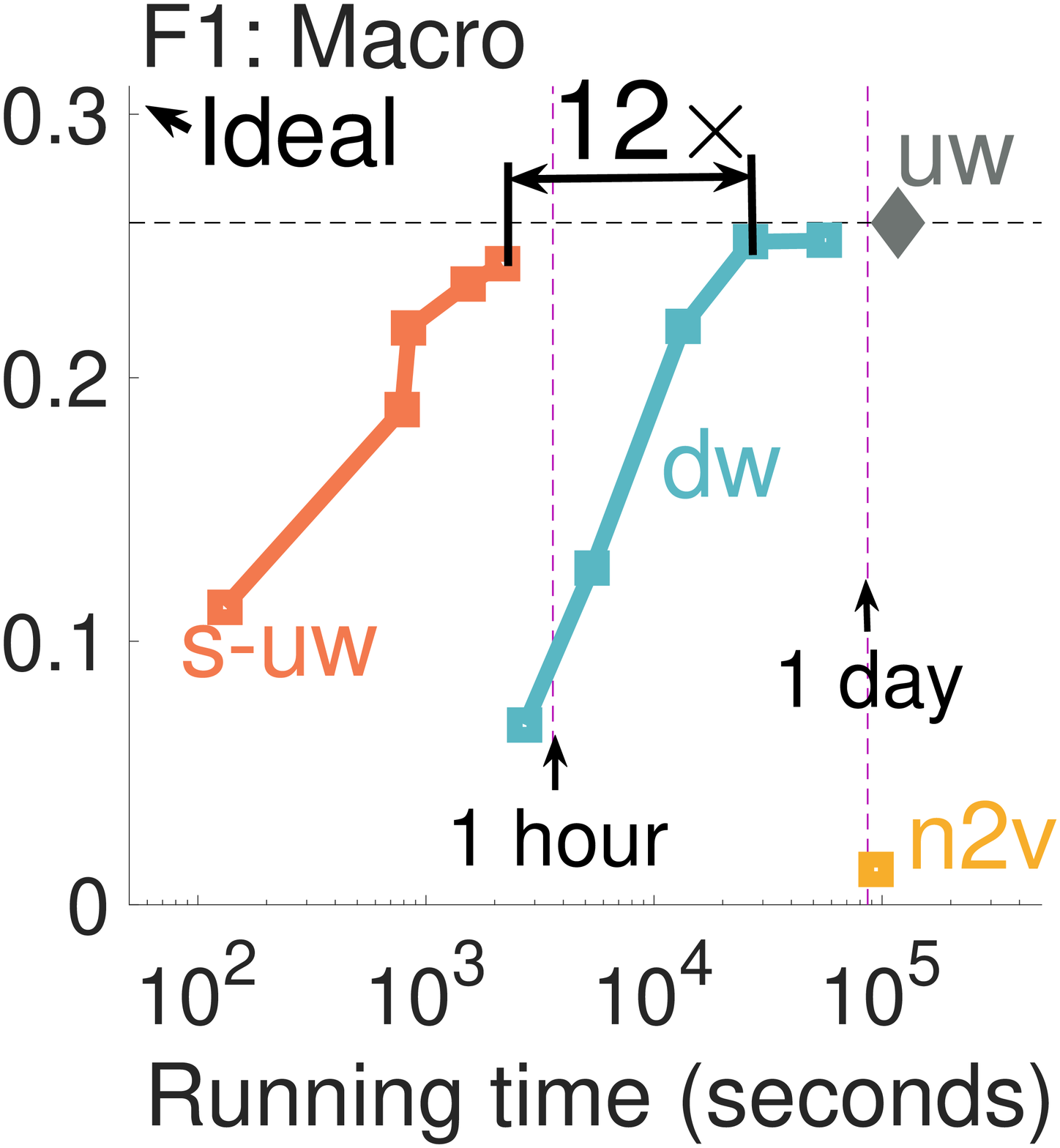}   & \includegraphics[width=0.3\columnwidth]{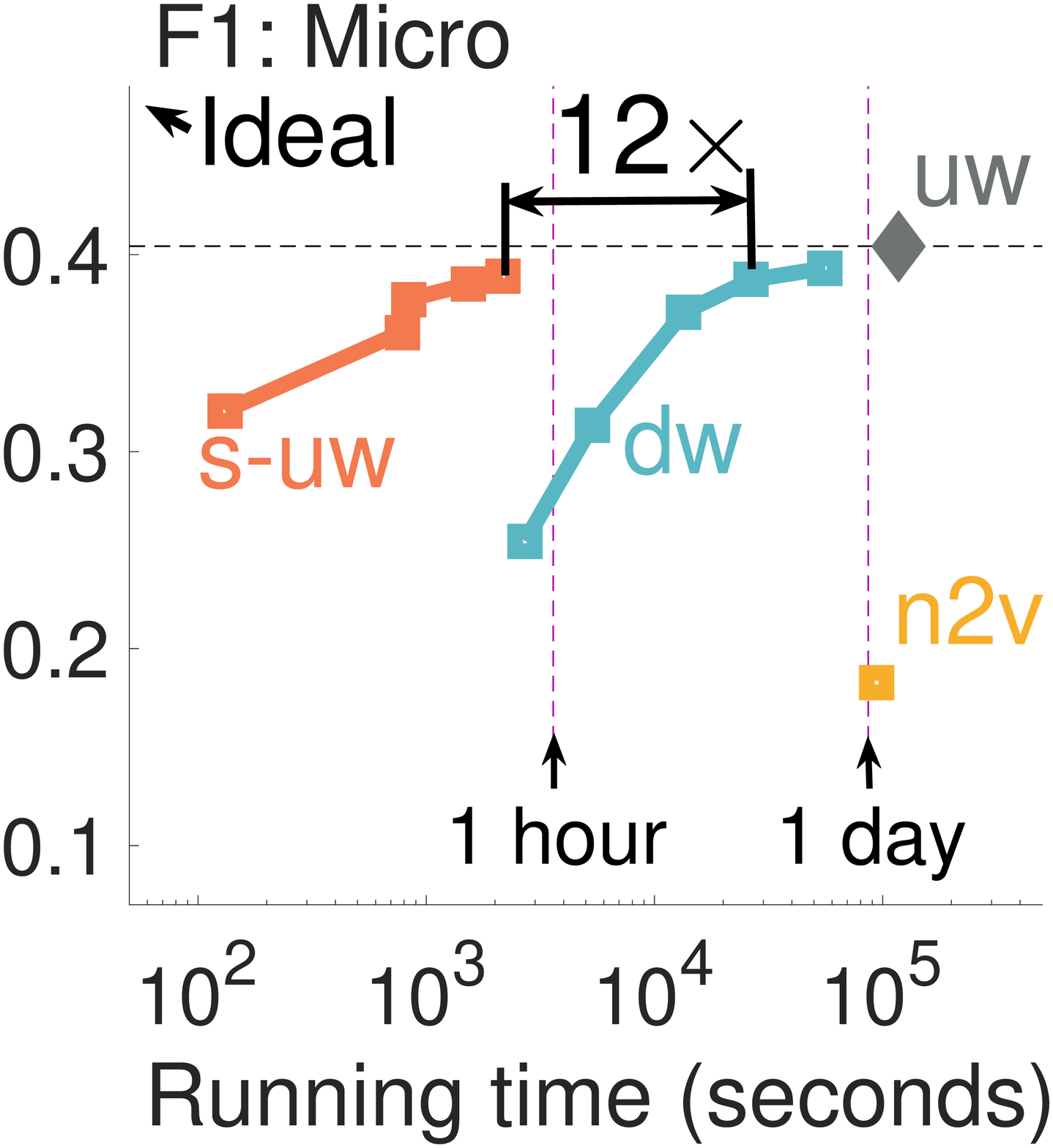}
\\
 {\small (e) Macro in Flickr. }  &   {\small (f) Micro  in Flickr. }
 \\
    \end{tabular}
  \end{center}
  \caption{\label{fig:blog_converge} \emph{UltimateWalk outperforms competition.} The closed-form UltimateWalk (uw, grey diamond) is the most accurate; scalable UltimateWalk (s-uw, red) is much faster than DeepWalk (dw, blue) and node2vec (n2v, yellow). }
\end{figure}

\begin{figure}[htb]
  \begin{center}
    \begin{tabular}{cc}    
     \includegraphics[width=0.3\columnwidth]{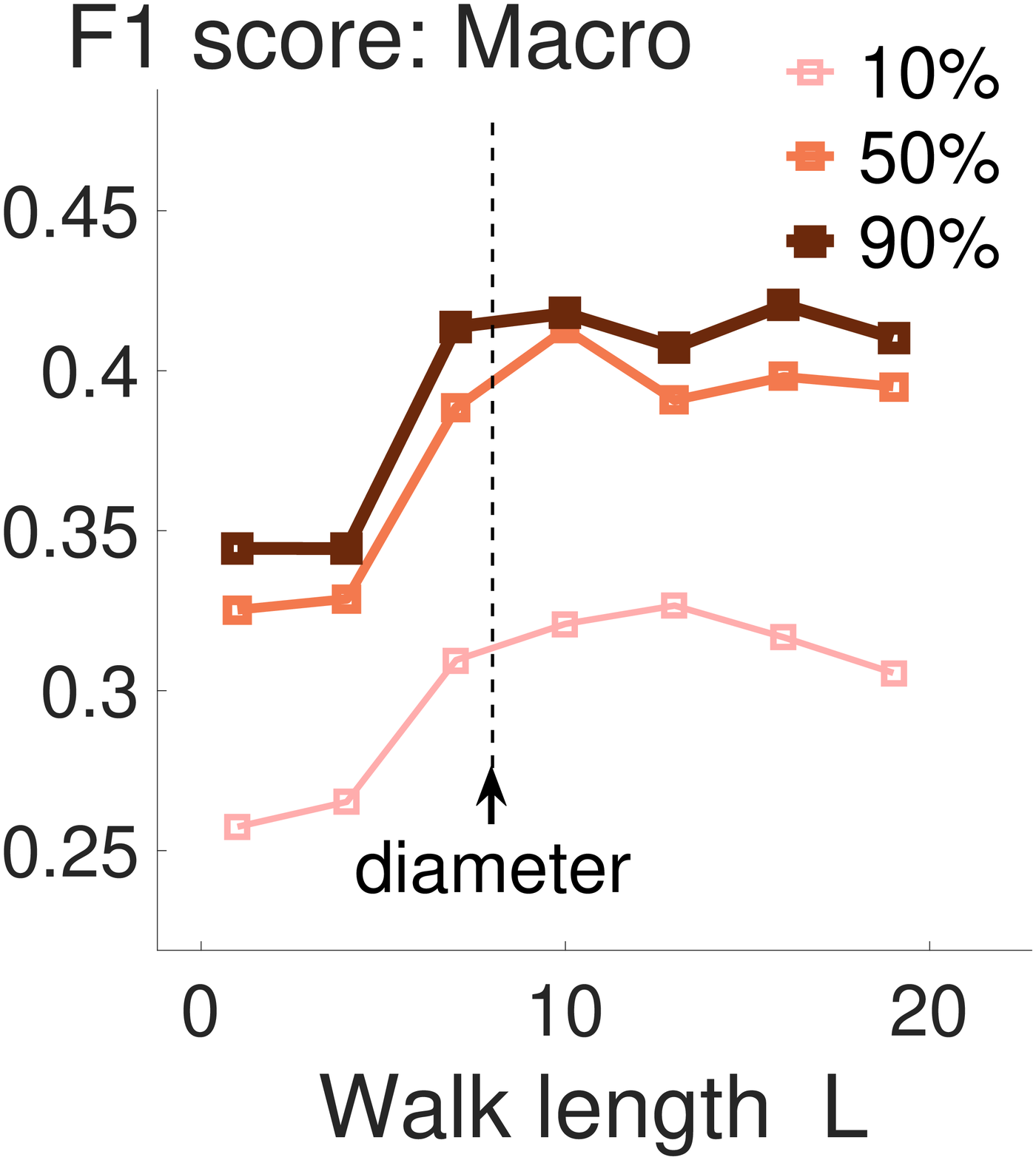}   & \includegraphics[width=0.3\columnwidth]{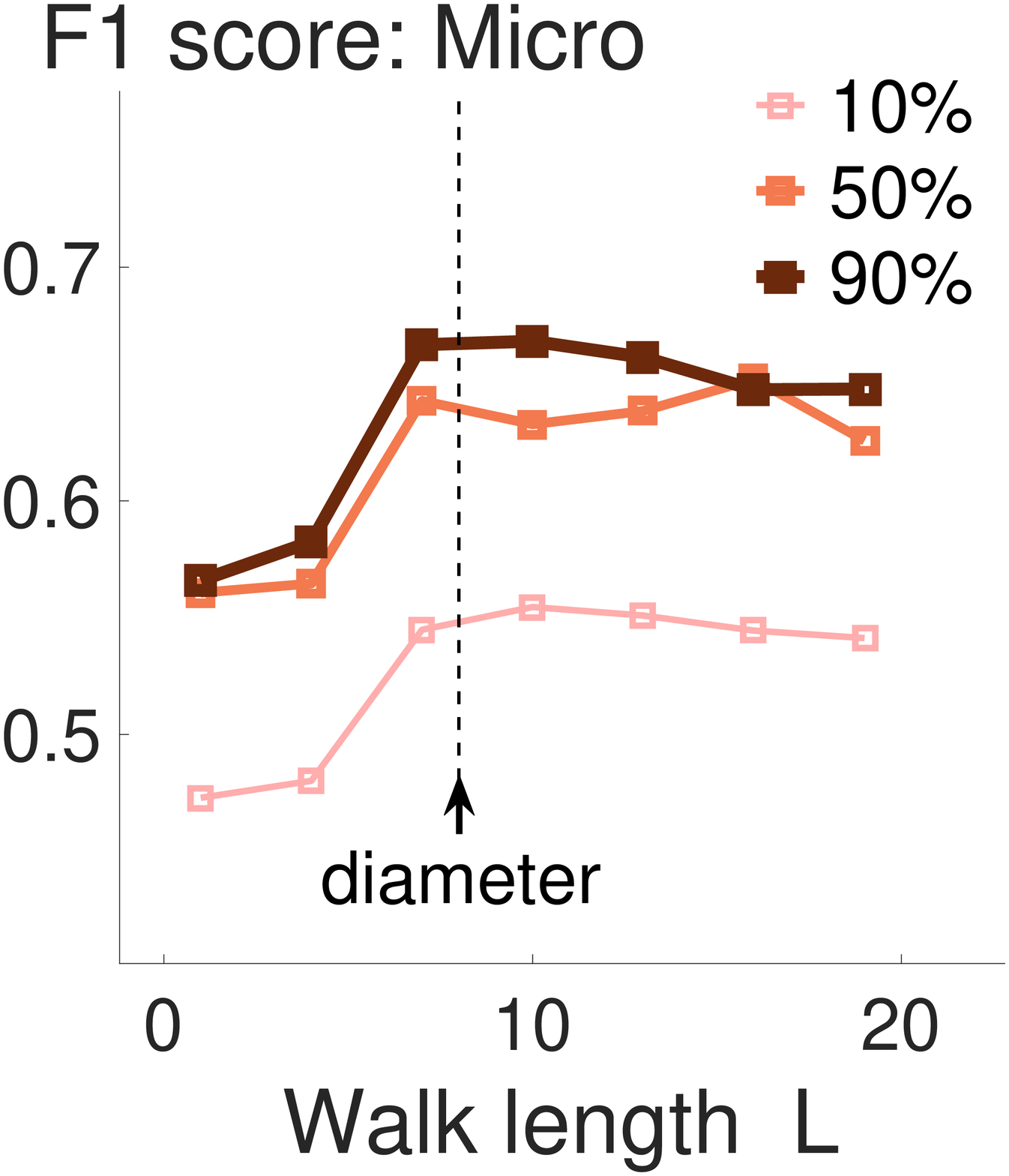} 
\\
 {\small (a) Macro; Kaggle 1968. }  &   {\small (b) Micro; Kaggle 1968. }
 \\
 \\
 \includegraphics[width=0.3\columnwidth]{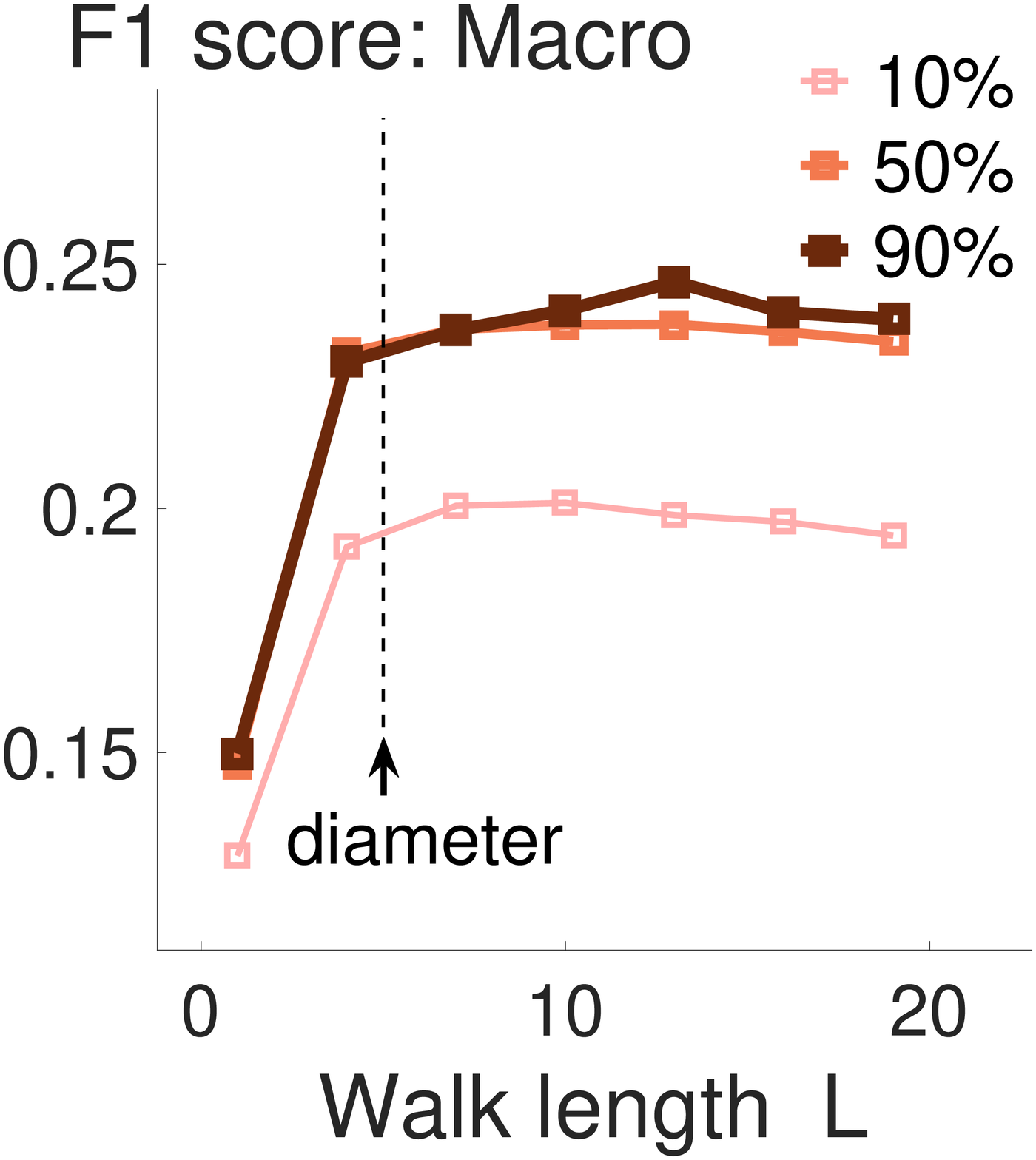}   & \includegraphics[width=0.3\columnwidth]{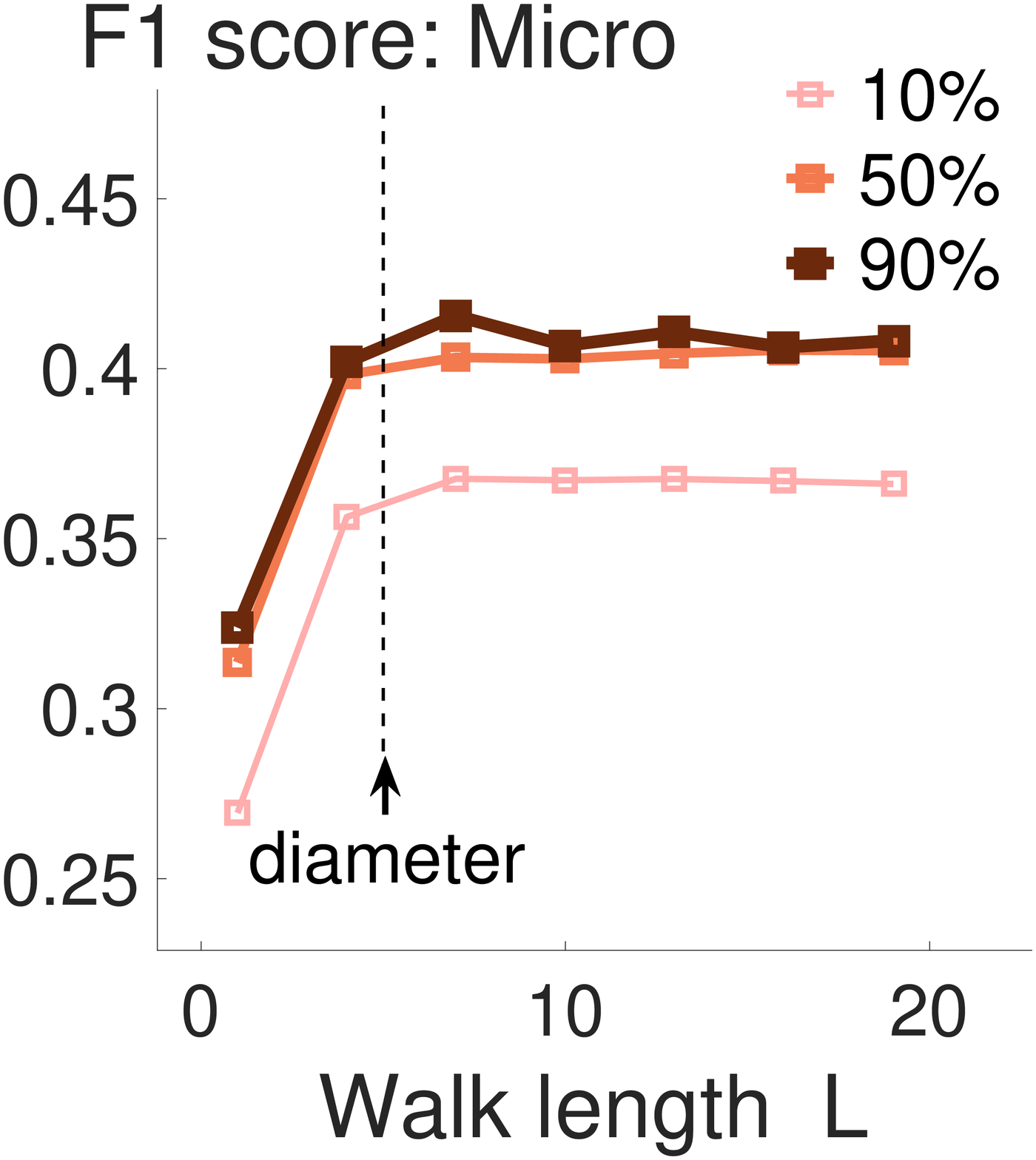} 
\\
 {\small (c) Macro; BlogCatalog. }  &   {\small (d) Micro; BlogCatalog. }
    \end{tabular}
  \end{center}
  \caption{\label{fig:blog_step} ~\emph{The walk length matters.} The walk length influences the classification performance. An empirical choice of walk length is the diameter of input graph (Kaggle 1968 and BlogCatalog). }
\end{figure}

\subsubsection{One step vs. Long walk}
In the FST matrix, a tuning parameter is the walk length, which controls the range of random walks.  We consider the closed-form UltimateWalk with the FST matrix $\Pi^{(L)}$, varying $L$ from $1$ to $20$. The goal is to understand how the walk length influences the overall performance.

We compare the classification performance on Kaggle 1968 and Blogcatalog in Figure{\color{red}~\ref{fig:blog_step}}, where the $x$-axis is the walk length and the $y$-axis is the F1 score. In each plot, light, medium and dark red curves show the performance under the labeling ratios of $10\%$, $50\%$ and $90\%$, respectively. 

We see that the walk length has a significant impact on the classification performance; it increases considerably when the walk length increases in the beginning and then drops slowly when the number of walks increases even further. The intuition is as follows: when the  walk length is too small, only low-order proximity information is obtained and the community-wise information is hard to capture; when the walk length is too large, high-order proximity information is over-exposed; that is, the proximity matrix tends to be an all-one matrix and does not contain any structure information. Empirically, we find that the sweet spot of
the walk length is approximately the diameter of a graph, which is usually around $7$ for a social network. Note that the diameters of Kaggle 1968 and BlogCatalog are $8$ and $5$, respectively.

\mypar{Summary} The walk length matters; the choice should be the diameter of the graph.

\begin{figure}[htb]
  \begin{center}
    \begin{tabular}{cc}
\includegraphics[width=0.3\columnwidth]{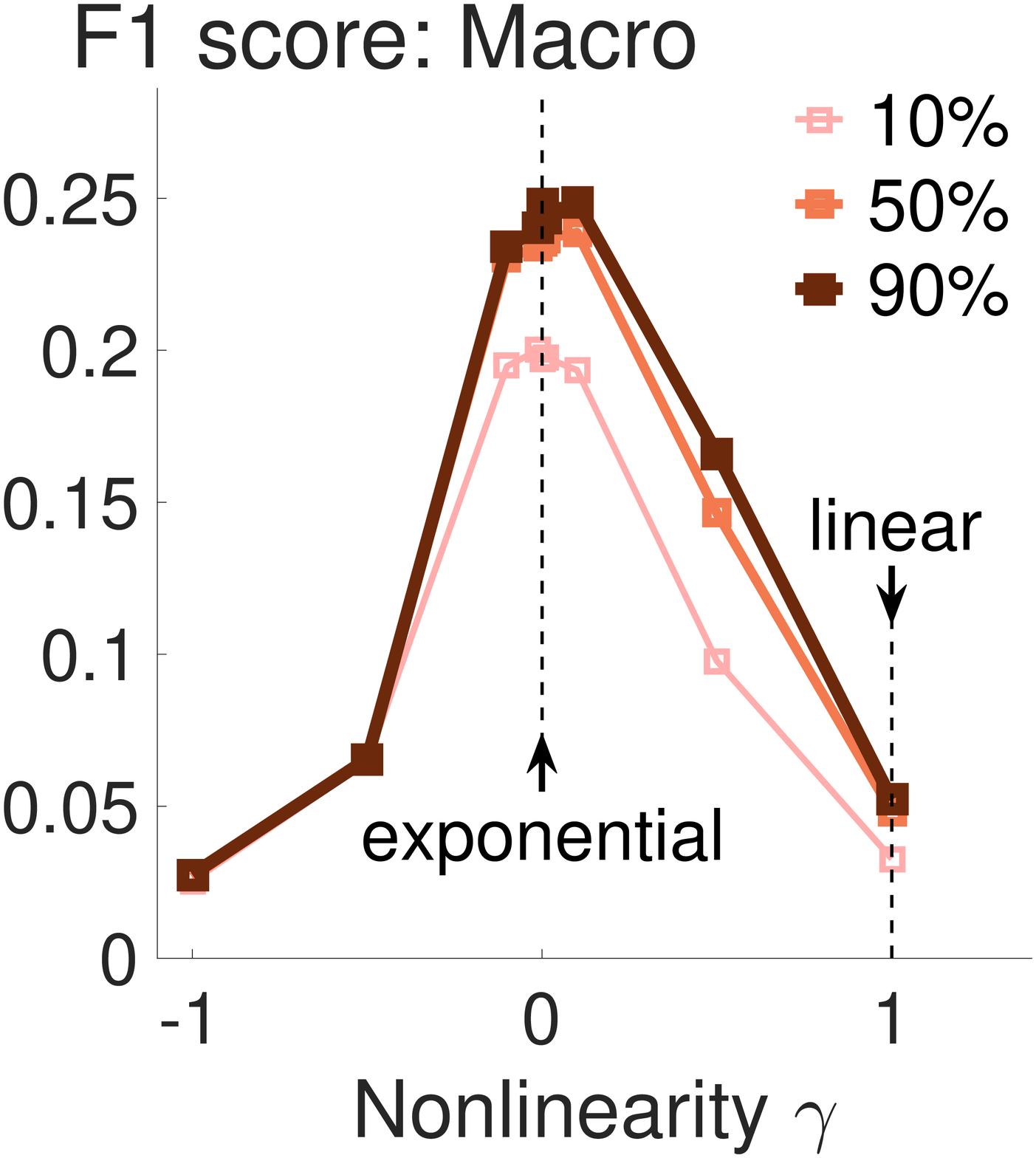}   & \includegraphics[width=0.3\columnwidth]{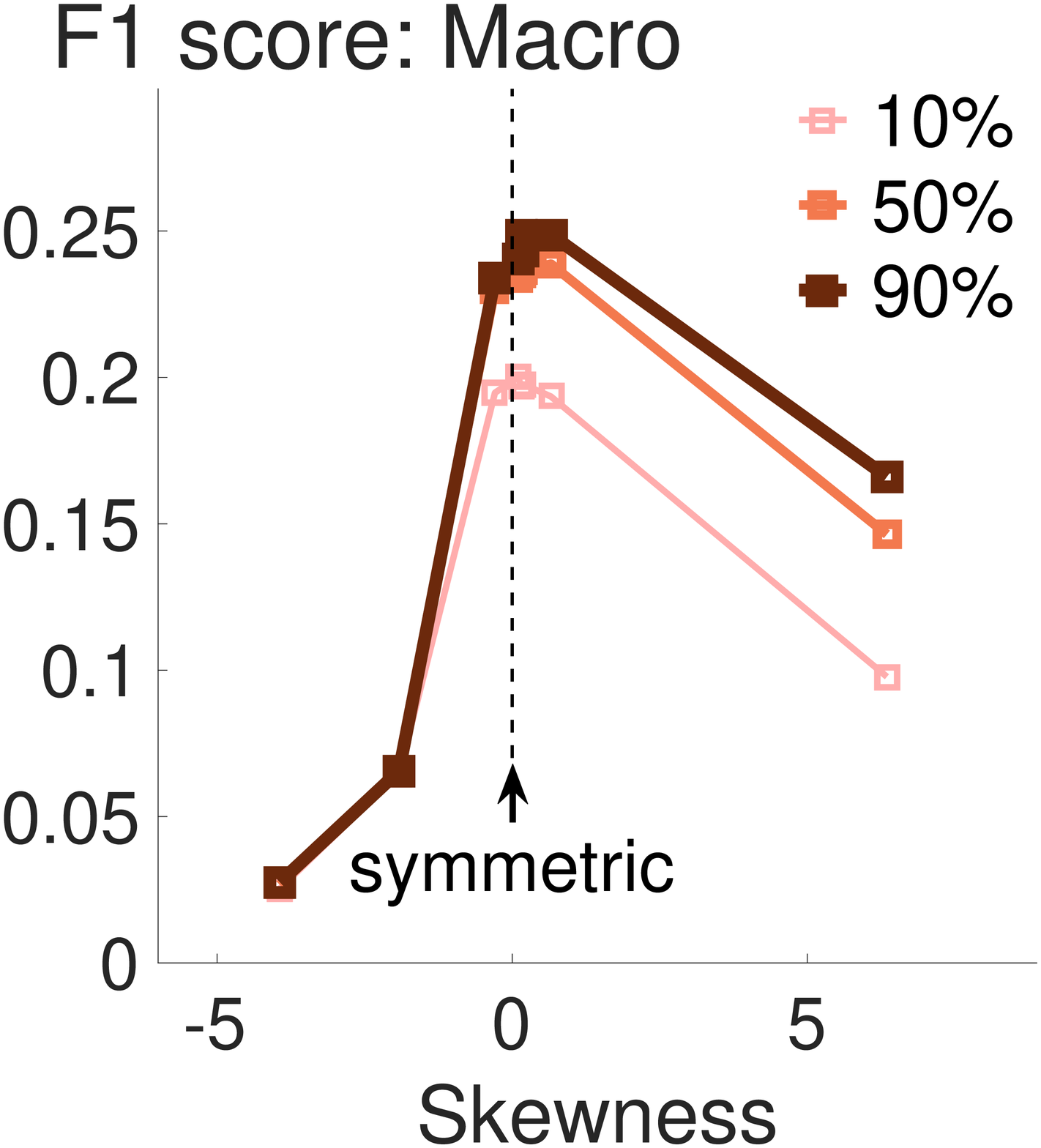} 
\\
 {\small (a) Nonlinearity; Macro. }  &   {\small (b) Skewness; Macro. }
 \\
  \\
 \includegraphics[width=0.3\columnwidth]{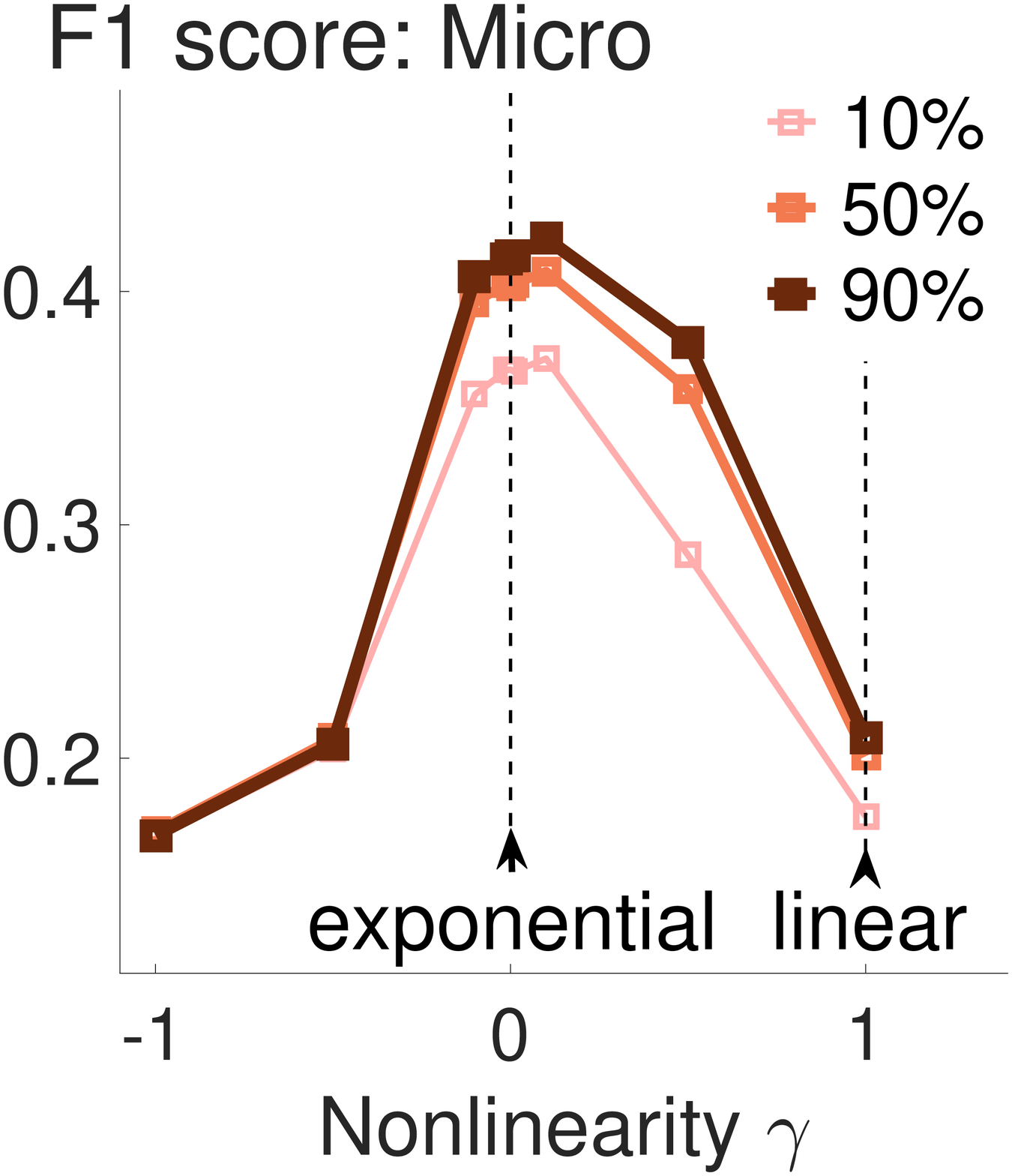}   & \includegraphics[width=0.3\columnwidth]{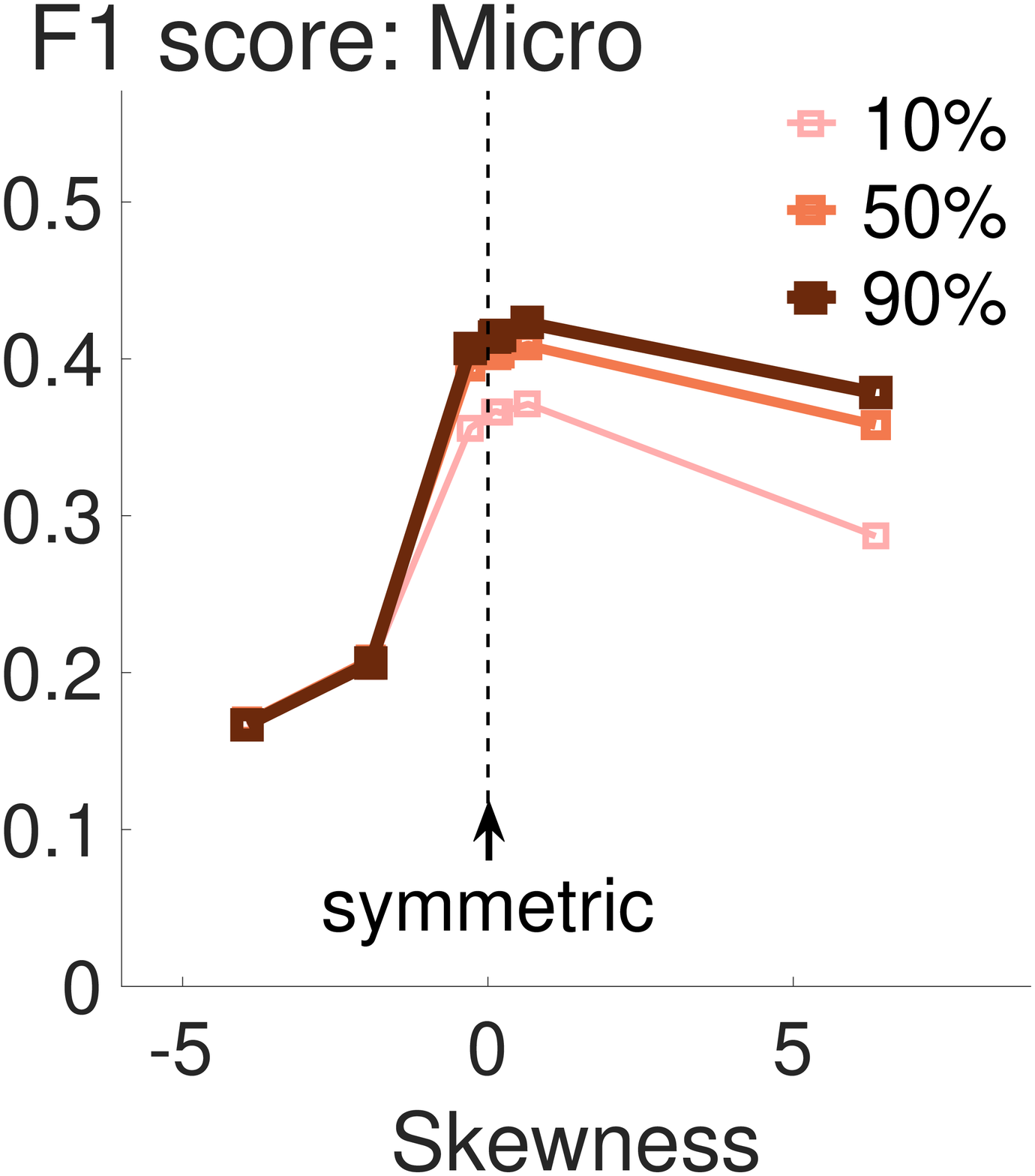} 
\\
 {\small (c) Nonlinearity; Micro. }  &   {\small (d) Skewness; Micro. }
    \end{tabular}
  \end{center}
  \caption{\label{fig:blog_nonlinear} \emph{Nonlinearity matters.}  Nonlinearity and skewness significantly influence the F1 score (Blogcatalog). The warping function normalizes  the distribution of all elements in the proximity matrix; when the distribution is symmetric, skewness is zero and the best performance is achieved. }
\end{figure}

\subsubsection{Linear vs. Nonlinear}
Many previous embedding methods, such as SNE, DeepWalk and node2vec, adopt the softmax model, which corresponds to an exponential warping function $g(x) = \exp(x)$. How does it differ from a linear warping function $g(x) = x$? We consider GEM-D with the FST matrix $\Pi^{(L)}$ with $L = 7$ and vary the nonlinear warping function  $g(x) = (1+\gamma x)^\frac{1}{\gamma}$ from $-1$ to $1$; note that $g(x) = \exp(x)$ when $\gamma = 0$.

We compare the classification performances on Blogcatalog in Figure{\color{red}~\ref{fig:blog_nonlinear}} (a) and (c), where the $x$-axis is the nonlinearity parameter $\gamma$ 
and the $y$-axis is the F1 score. In each plot, light, medium and dark red  curves show the performance under the labeling ratios of $10\%$, $50\%$ and $90\%$, respectively.  

We see that  the nonlinearity has a significant impact on the classification performance. A linear function ($\gamma = 1$) results in poor performance. The classification accuracy  increases significantlyl as more nonlinearity is introduced ($\gamma$ decreases). The exponential function ($\gamma = 0$) almost always provides the best performance. After that, however, adding more nonlinearity deteriorates the classification performance. These results validate the superiority of the (nonlinear) exponential model over the linear model.



In UltimateWalk, we consider factoring $\log (\Pi)$. The logarithm  is introduced to rescale the distribution of $\Pi$ and amplify small structural discrepancies. On the other hand, when we introduce more  nonlinearity than the logarithm, all the values tend to be the same. How should we then choose the nonlinearity parameter $\gamma$? Empirically, we find that the skewness of the element distribution of $g^{-1} (\Pi)$ influences the classification performance. Let $x$ be a random variable. We define 
$
 {\rm skewness} \ = \ {\mathbb{E} (x-\mu)^3}/{\sigma^3},
$
where $\mu$ and $\sigma$ are the mean and standard deviation of $x$. Figure{\color{red}~\ref{fig:blog_nonlinear}} (b) and (d) show the relations between skewness and  classification performance; we see that the classification performance reaches its highest value when the skewness is zero. To illustrate the intuition, Figure{\color{red}~\ref{fig:hist}} shows the histogram of elements in the proximity matrix $g^{-1} (\Pi^{(7)})$ under various $\gamma$, where $x$-axis is the value and $y$-axis counts the number of elements associated with the value; when $\gamma = 0.5$, slight nonlinearity is introduced and, as shown in Figure{\color{red}~\ref{fig:hist}} (c), the distribution is highly left-skewed, meaning most pairwise relationships are weak. This is common in social networks due to the power-law phenomenon~\cite{FaloutsosFF:99}. The logarithm factor rescales the distribution to make it symmetric. An empirical choice of the optimal nonlinearity could be to randomly sample a subset of elements in $\Pi$ and choose the $\gamma$ that makes the distribution symmetric.

\mypar{Summary} The warping function matters; the choice should be the one that makes the distribution of elements in the proximity matrix symmetric.

\begin{figure}[htb]
  \begin{center}
    \begin{tabular}{ccc}
  \includegraphics[width=0.2\columnwidth]{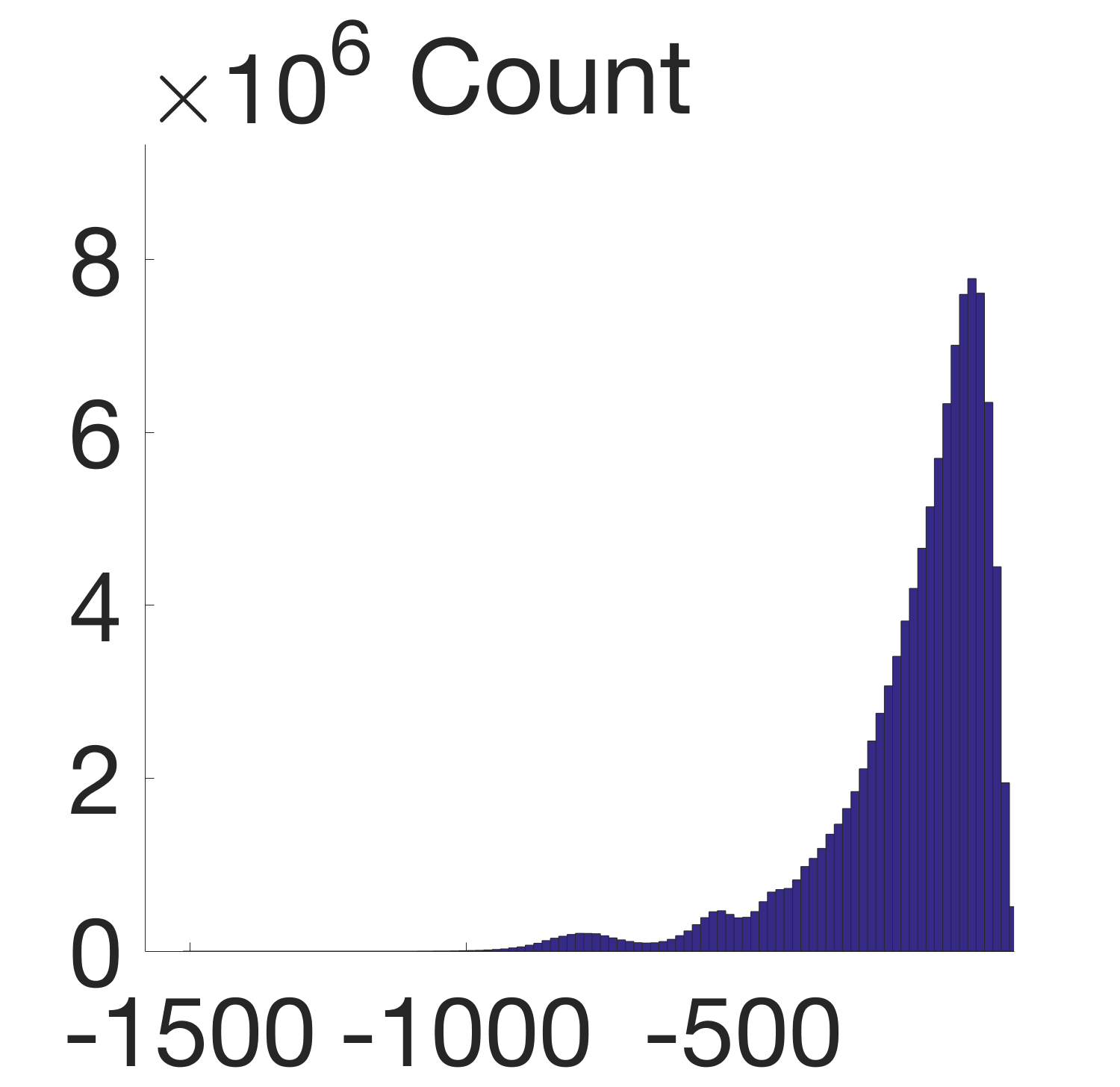}  &
  \includegraphics[width=0.2\columnwidth]{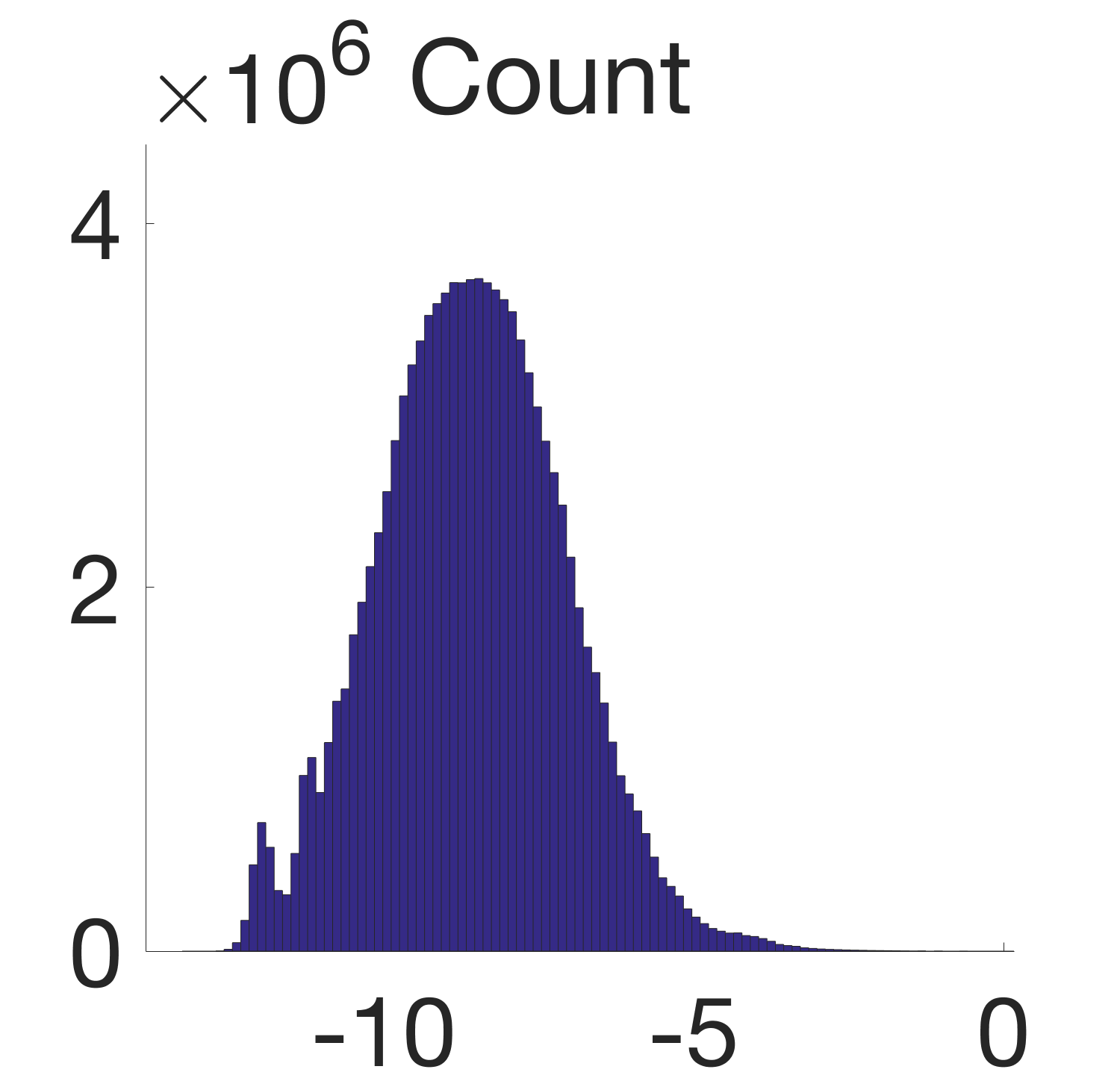}  &
  \includegraphics[width=0.2\columnwidth]{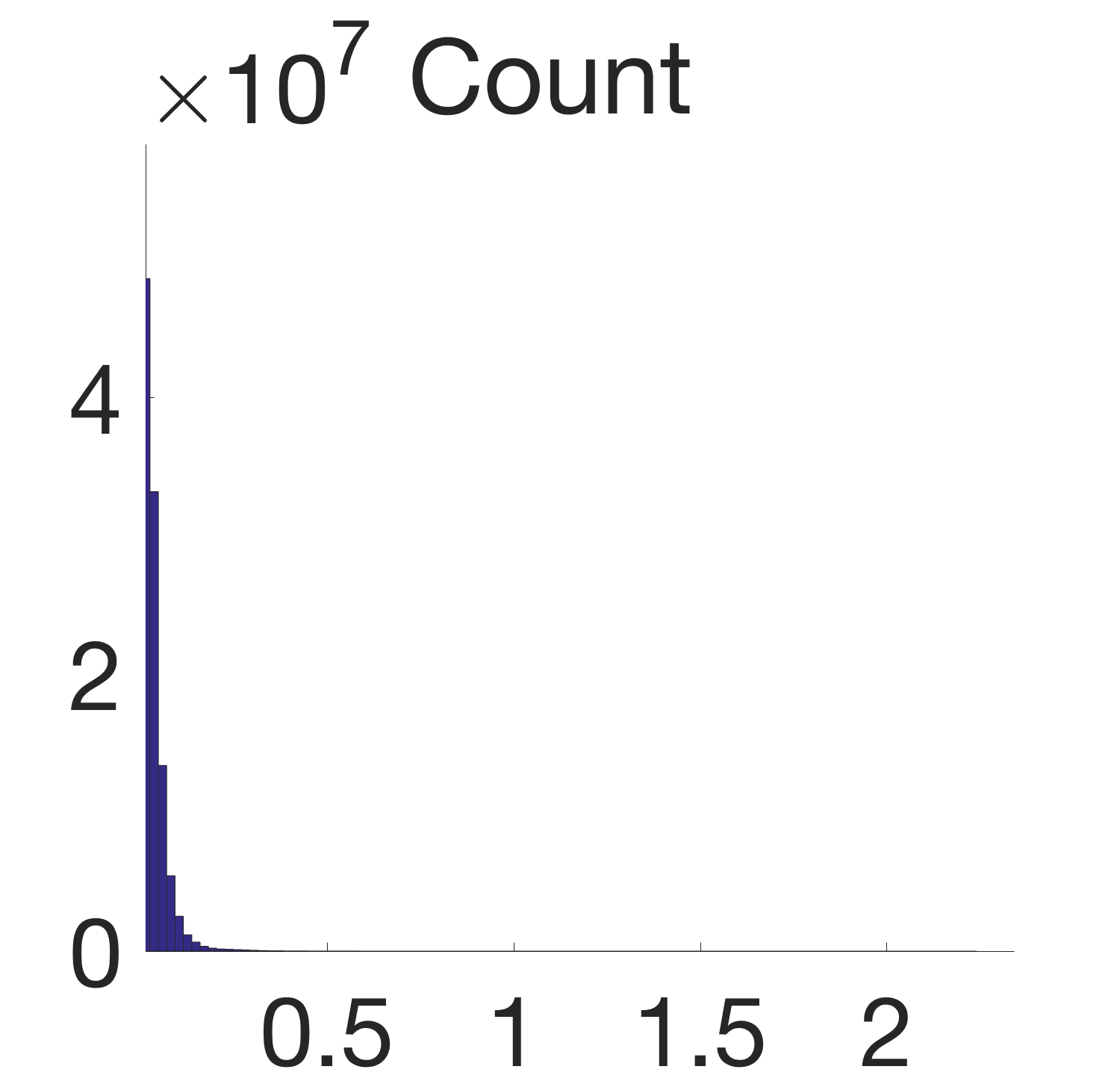}  
  \\
{\small (a) $\gamma = -0.5$, }  &   
{\small (b) $\gamma = 0$,}  &   
{\small (c) $\gamma = 0.5$, }  
\\
{\small skewness $ -1.92$. }  &
{\small skewness $0.15$. }  & 
{\small skewness $6.35$. }  
    \end{tabular}
  \end{center}
  \caption{\label{fig:hist} \emph{Symmetry matters.} A symmetric distribution of elements in $g^{-1} (\Pi)$ optimizes the empirical performance (BlogCatalog).  }
\end{figure}

\subsubsection{No memory vs. memory}
In node2vec, the memory term is adopted to design a biased random walk, which explores the graph structure in a controlled manner. The empirical results also show the superiority of node2vec over DeepWalk with appropriate memory parameters; however, the comparison suffers from the suboptimality of solving a non-convex optimization problem and randomness of obtaining biased random walks. Here we tune the memory parameters in various settings and obtain the corresponding closed-form solutions via singular value decomposition, which help us understand how the memory term influences the overall performance.  We consider GEM-D with the warping function $g(x) = \exp(x)$ and the FSMT matrix $\Pi^{(L,p,q)}$. We set $L=7$ and vary $p, q$ on the 2-D grid of  $[0.1, 0.5, 1,  2, 5]$.

We compare the classification performances on Kaggle 1968 in Figure{\color{red}~\ref{fig:kaggle_1968_node2vec}}, where the $x$-axis is $q$ and the $y$-axis is the F1 score. In each plot, five curves show the performance under various $p$. We set the labeling ratio to $50\%$. We see that the memory parameters influence the classification performance only slightly: under various settings of $p$ and $q$, the classification performances do not vary too much. The results show that the memory term can provide more flexibility and improves the overall performance, but it is not as significant as the walk length and nonlinearity.

\mypar{Summary} The memory factor does not matter  much.

\begin{figure}[thb]
  \begin{center}
    \begin{tabular}{cc}
\includegraphics[width=0.3\columnwidth]{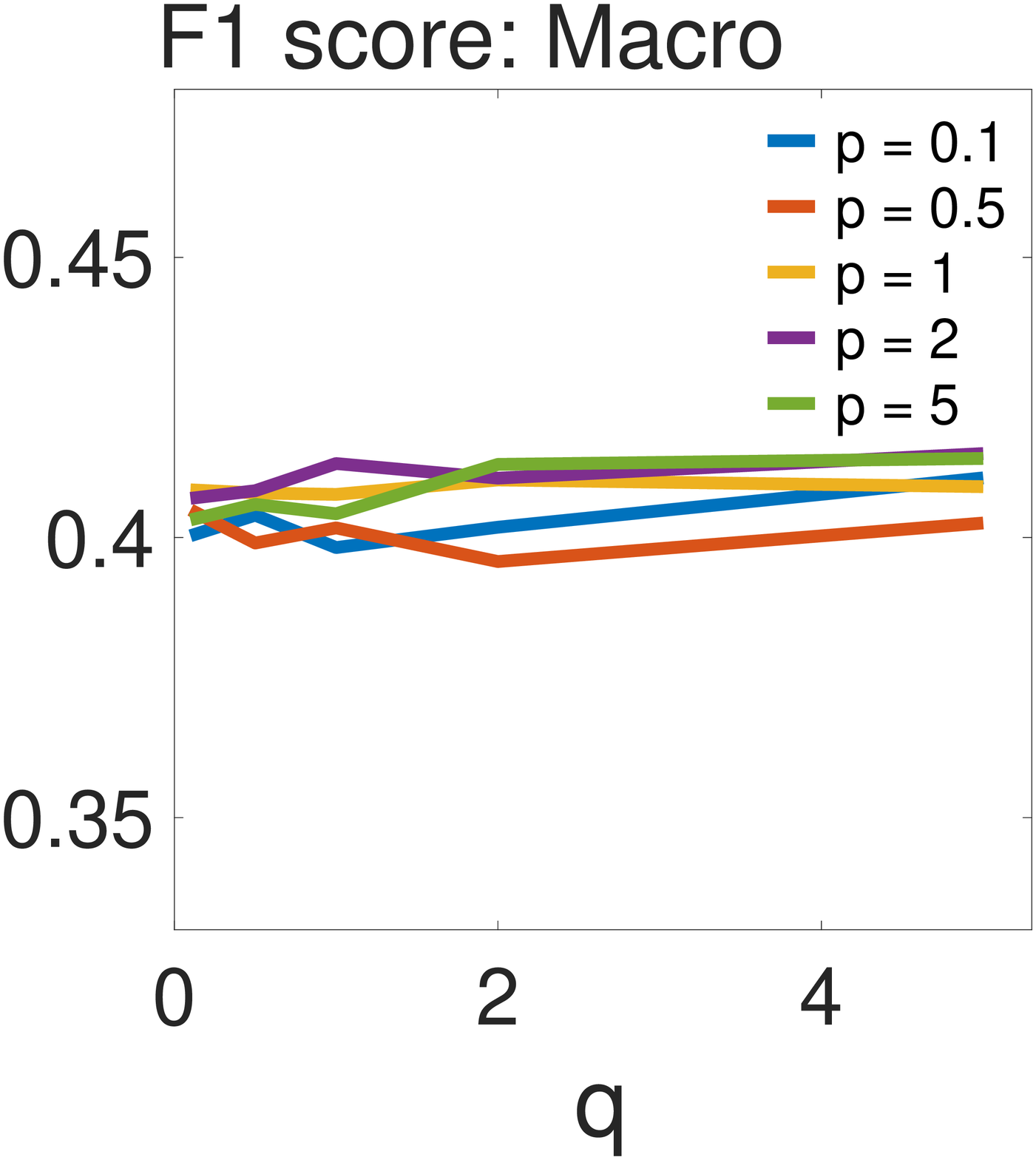}   & \includegraphics[width=0.3\columnwidth]{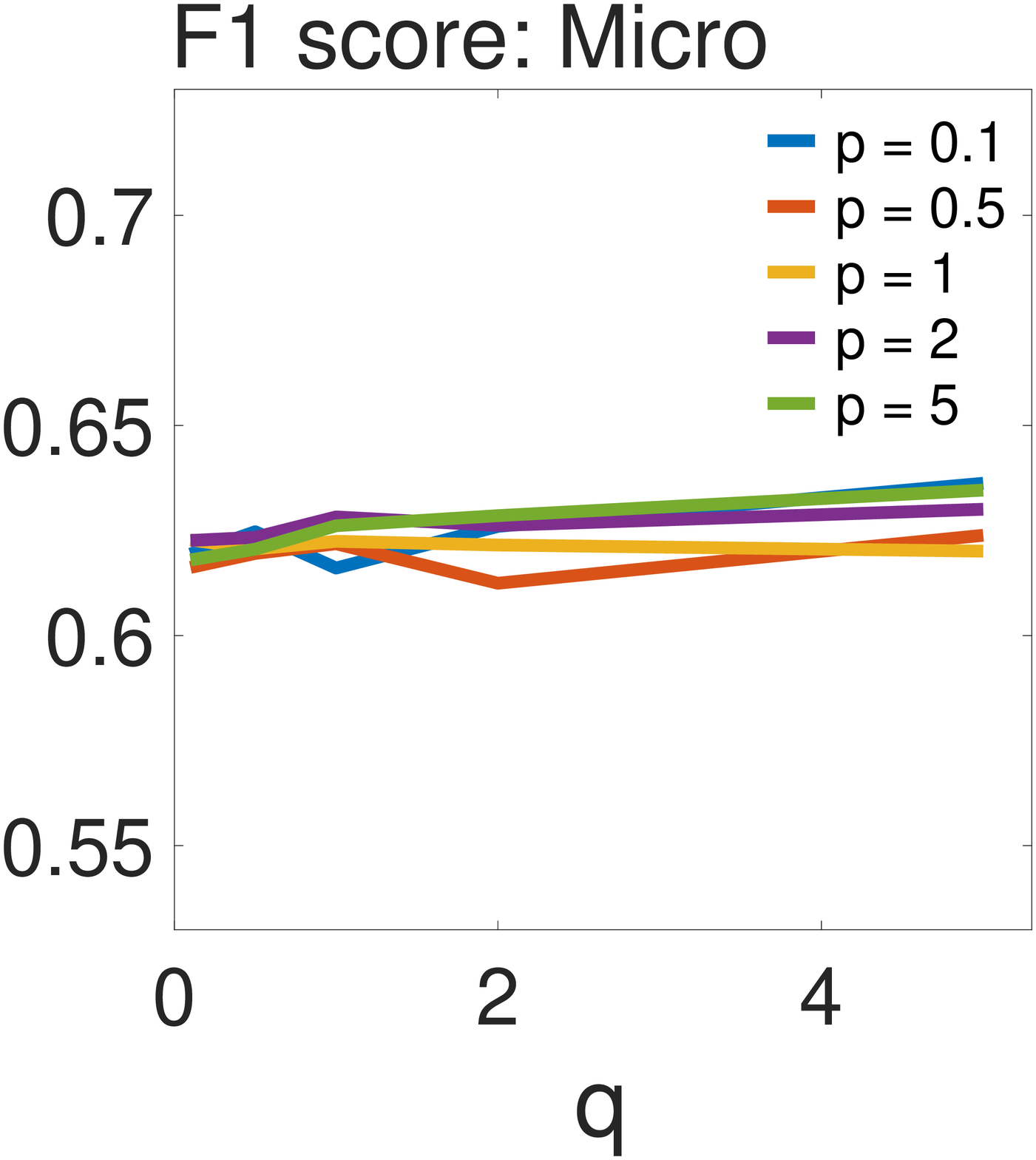} 
\\
 {\small (a) Macro at labeling ratio $50\%$. }  &   {\small (b) Micro at labeling ratio $50\%$. }
    \end{tabular}
  \end{center}
  \caption{\label{fig:kaggle_1968_node2vec}  \emph{Memory does not matter.} Memory factors do not significantly influence the classification performance (Kaggle 1968).}
\end{figure}



\subsection{Scalability}
We test scalable UltimateWalk on the U.S. patent 1975--1999.  We use the chronological order to construct a growing citation network (as new nodes and edges are added). We compute the corresponding graph embedding by using scalable UltimateWalk. We set the number of random walks $m = 50$. Figure{\color{red}~\ref{fig:patent}} shows its running time ($y$-axis) versus the number of input edges ($x$-axis). Red dots depict the graph at different years. The blue dotted line is not a linear fit, but approximately shows the linear trend. We see that scalable UltimateWalk scales linearly with the number of input edges, and embeds a 14-million edge graph under 40 minutes.

\section{Conclusions}
\label{sec:guide}
We conclude our work with a summary of our observations and suggest a practical guide to designing a graph embedding.

$\bullet$  {\bf Unifying framework.} GEM-D provides a unifying framework to design a graph embedding and subsumes previously proposed ones such as LapEigs, LINE, DeepWalk and node2vec; see Table{\color{red}~\ref{table:general}} and Section{\color{red}~\ref{sec:closed}}. GEM-D allows us to analyze the asymptotic behavior of DeepWalk and node2vec.

$\bullet$  {\bf Fundamental understanding.} We analyze each building block in GEM-D and observe that: (1) the warping function significantly influences the overall performance. The optimal warping function should appropiately rescale the distribution of the proximity matrix to be low-rank and symmetric. The exponential warping function usually maximizes the performance; see Figure{\color{red}~\ref{fig:blog_nonlinear}}; (2) the walk length in the node proximity function significantly influences the overall performance; both too large and too small values could be problematic. The empirical sweet spot is approximately the diameter of a graph; see Figure{\color{red}~\ref{fig:blog_step}}; and (3) the memory factor slightly influences the performance while introducing extra computational cost; see Figure{\color{red}~\ref{fig:kaggle_1968_node2vec}}.

$\bullet$ {\bf Fast, one-click algorithm.}  We propose UltimateWalk as a novel graph embedding algorithm, which is effective and without user-defined parameters and outperforms the competition (see Figure{\color{red}~\ref{fig:blog_converge}}), it scales linearly with the number of input edges (see Figure{\color{red}~\ref{fig:patent}} and Lemma{\color{red}~\ref{thm:complexity}}).

\bibliographystyle{IEEEbib}
\bibliography{sigproc} 

\begin{thebibliography}{10}
\def\url#1{}
\csname url@samestyle\endcsname
\providecommand{\newblock}{\relax}
\providecommand{\bibinfo}[2]{#2}
\providecommand{\BIBentrySTDinterwordspacing}{\spaceskip=0pt\relax}
\providecommand{\BIBentryALTinterwordstretchfactor}{4}
\providecommand{\BIBentryALTinterwordspacing}{\spaceskip=\fontdimen2\font plus
\BIBentryALTinterwordstretchfactor\fontdimen3\font minus
  \fontdimen4\font\relax}
\providecommand{\BIBforeignlanguage}[2]{{%
\expandafter\ifx\csname l@#1\endcsname\relax
\typeout{** WARNING: IEEEtran.bst: No hyphenation pattern has been}%
\typeout{** loaded for the language `#1'. Using the pattern for}%
\typeout{** the default language instead.}%
\else
\language=\csname l@#1\endcsname
\fi
#2}}
\providecommand{\BIBdecl}{\relax}
\BIBdecl

\bibitem{HermanMM:00}
I.~Herman, G.~Melancon, and M.~S. Marshall, ``Graph visualization and
  navigation in information visualization: {A} survey,'' \emph{{IEEE} Trans.
  Vis. Comput. Graph.}, vol.~6, no.~1, pp. 24--43, 2000.

\bibitem{Luxburg:07}
U.~von Luxburg, ``A tutorial on spectral clustering,'' \emph{Statistics and
  Computing}, vol.~17, no.~4, pp. 395--416, 2007.

\bibitem{SenNBGGE:08}
P.~Sen, G.~Namata, M.~Bilgic, L.~Getoor, B.~Gallagher, and T.~Eliassi{-}Rad,
  ``Collective classification in network data,'' \emph{{AI} Magazine}, vol.~29,
  no.~3, pp. 93--106, 2008.

\bibitem{ZhuGL:03}
X.~Zhu, Z.~Ghahramani, and J.~Lafferty, ``Semi-supervised learning using
  {G}aussian fields and harmonic functions,'' in \emph{Proc. ICML}, 2003, pp.
  912--919.

\bibitem{Liben-NowellK:07}
D.~Liben{-}Nowell and J.~M. Kleinberg, ``The link-prediction problem for social
  networks,'' \emph{{JASIST}}, vol.~58, no.~7, pp. 1019--1031, 2007.

\bibitem{BackstromL:11}
L.~Backstrom and J.~Leskovec, ``Supervised random walks: predicting and
  recommending links in social networks,'' in \emph{Proceedings of the Forth
  International Conference on Web Search and Web Data Mining, {WSDM}}, Hong
  Kong, China, 2011, pp. 635--644.

\bibitem{YuRSGSKNH:14}
X.~Yu, X.~Ren, Y.~Sun, Q.~Gu, B.~Sturt, U.~Khandelwal, B.~Norick, and J.~Han,
  ``Personalized entity recommendation: a heterogeneous information network
  approach,'' in \emph{Seventh {ACM} International Conference on Web Search and
  Data Mining, {WSDM} 2014}, New York, NY, 2014, pp. 283--292.

\bibitem{AkogluTK:15}
L.~Akoglu, H.~Tong, and D.~Koutra, ``Graph based anomaly detection and
  description: a survey,'' \emph{Data Min. Knowl. Discov.}, vol.~29, no.~3, pp.
  626--688, 2015.

\bibitem{BelkinN:03}
M.~Belkin and P.~Niyogi, ``Laplacian eigenmaps for dimensionality reduction and
  data representation,'' \emph{Neural Computation}, vol.~15, no.~6, pp.
  1373--1396, 2003.

\bibitem{CoifmanL:06}
R.~R. Coifman and S.~Lafon, ``Diffusion maps,'' \emph{Appl. Comput. Harmon.
  Anal.}, vol.~21, pp. 5--30, Jul. 2006.

\bibitem{KorenBV:09}
Y.~Koren, R.~M. Bell, and C.~Volinsky, ``Matrix factorization techniques for
  recommender systems,'' \emph{{IEEE} Computer}, vol.~42, no.~8, pp. 30--37,
  2009.

\bibitem{TangQWZYM:15}
J.~Tang, M.~Qu, M.~Wang, M.~Zhang, J.~Yan, and Q.~Mei, ``Line: Large-scale
  information network embedding,'' in \emph{Proceedings of the 24th
  International Conference on World Wide Web}, Florence, Italy, 2015, pp.
  1067--1077.

\bibitem{PerozziAS:14}
B.~Perozzi, R.~Al{-}Rfou, and S.~Skiena, ``Deepwalk: online learning of social
  representations,'' in \emph{The 20th {ACM} {SIGKDD} International Conference
  on Knowledge Discovery and Data Mining, {KDD} '14}, New York, NY, 2014, pp.
  701--710.

\bibitem{GroverL:16}
A.~Grover and J.~Leskovec, ``node2vec: Scalable feature learning for
  networks,'' in \emph{Proceedings of the 22nd {ACM} {SIGKDD} International
  Conference on Knowledge Discovery and Data Mining}, San Francisco, CA, 2016,
  pp. 855--864.

\bibitem{TenenbaumDL:00}
J.~B. Tenenbaum, V.~D. Silva, and J.~C. Langford, ``A global geometric
  framework for nonlinear dimensionality reduction,'' \emph{Science}, vol. 290,
  pp. 2319--2323, Dec. 2000.

\bibitem{RoweisS:00}
S.~T. Roweis and L.~K. Saul, ``Nonlinear dimensionality reduction by locally
  linear embedding,'' \emph{Science}, vol. 290, pp. 2323--2326, Dec. 2000.

\bibitem{HintonR:02}
G.~E. Hinton and S.~T. Roweis, ``Stochastic neighbor embedding,'' in
  \emph{Advances in Neural Information Processing Systems {NIPS}}, Vancouver,
  British Columbia, 2002, pp. 833--840.

\bibitem{MaatenH:08}
L.~van~der Maaten and G.~E. Hinton, ``Visualizing high-dimensional data using
  t-sne,'' \emph{Journal of Machine Learning Research}, vol.~9, pp. 2579--2605,
  2008.

\bibitem{TangLZM:16}
J.~Tang, J.~Liu, M.~Zhang, and Q.~Mei, ``Visualizing large-scale and
  high-dimensional data,'' in \emph{Proceedings of the 25th International
  Conference on World Wide Web, {WWW}}, Montreal, Canada, 2016, pp. 287--297.

\bibitem{MikolovCCD:13}
T.~Mikolov, K.~Chen, G.~Corrado, and J.~Dean, ``Efficient estimation of word
  representations in vector space,'' \emph{CoRR}, vol. abs/1301.3781, 2013.

\bibitem{MikolovSCCD:13}
T.~Mikolov, I.~Sutskever, K.~Chen, G.~S. Corrado, and J.~Dean, ``Distributed
  representations of words and phrases and their compositionality,'' in
  \emph{Advances in Neural Information Processing Systems}, Lake Tahoe, Nevada,
  Dec. 2013, pp. 3111--3119.

\bibitem{OuCPZ:16}
M.~Ou, P.~Cui, J.~Pei, Z.~Zhang, and W.~Zhu, ``Asymmetric transitivity
  preserving graph embedding,'' in \emph{Proceedings of the 22nd {ACM} {SIGKDD}
  International Conference on Knowledge Discovery and Data Mining}, San
  Francisco, CA, Aug. 2016, pp. 1105--1114.

\bibitem{WangCZ:16}
D.~Wang, P.~Cui, and W.~Zhu, ``Structural deep network embedding,'' in
  \emph{Proceedings of the 22nd {ACM} {SIGKDD} International Conference on
  Knowledge Discovery and Data Mining}, San Francisco, CA, 2016, pp.
  1225--1234.

\bibitem{WangCZPZY:17}
X.~Wang, P.~Cui, J.~Wang, J.~Pei, W.~Zhu, and S.~Yang, ``Community preserving
  network embedding,'' in \emph{AAAI}, San Francisco, CA, 2017.

\bibitem{ZhengCCCC:16}
V.~W. Zheng, S.~Cavallari, H.~Cai, K.~C. Chang, and E.~Cambria, ``From node
  embedding to community embedding,'' \emph{CoRR}, vol. abs/1610.09950, 2016.

\bibitem{BrinP:98}
S.~Brin and L.~Page, ``The anatomy of a large-scale hypertextual web search
  engine,'' \emph{Computer Networks}, vol.~30, no. 1-7, pp. 107--117, 1998.

\bibitem{Haveliwala03}
T.~H. Haveliwala, ``Topic-sensitive pagerank: {A} context-sensitive ranking
  algorithm for web search,'' \emph{{IEEE} Trans. Knowl. Data Eng.}, vol.~15,
  no.~4, pp. 784--796, 2003.

\bibitem{TangL:09a}
L.~Tang and H.~Liu, ``Relational learning via latent social dimensions,'' in
  \emph{Proceedings of the 15th {ACM} {SIGKDD} International Conference on
  Knowledge Discovery and Data Mining}, Paris, France, 2009, pp. 817--826.

\bibitem{FanCHWL:08}
R.~Fan, K.~Chang, C.~Hsieh, X.~Wang, and C.~Lin, ``{LIBLINEAR:} {A} library for
  large linear classification,'' \emph{Journal of Machine Learning Research},
  vol.~9, pp. 1871--1874, 2008.

\bibitem{TangL:09b}
L.~Tang and H.~Liu, ``Scalable learning of collective behavior based on sparse
  social dimensions,'' in \emph{Proceedings of the 18th {ACM} Conference on
  Information and Knowledge Management, {CIKM}}, Hong Kong, China, 2009, pp.
  1107--1116.

\bibitem{LeskovecKF:05}
J.~Leskovec, J.~M. Kleinberg, and C.~Faloutsos, ``Graphs over time:
  densification laws, shrinking diameters and possible explanations,'' in
  \emph{Proceedings of the Eleventh {ACM} {SIGKDD} International Conference on
  Knowledge Discovery and Data Mining}, Chicago, Illinois, August 2005, pp.
  177--187.

\bibitem{FaloutsosFF:99}
M.~Faloutsos, P.~Faloutsos, and C.~Faloutsos, ``On power-law relationships of
  the internet topology,'' in \emph{{SIGCOMM}}, 1999, pp. 251--262.

\end{thebibliography}

\end{document}